\newcommand{\be}{\begin{equation}}
\newcommand{\ee}{\end{equation}}
\newcommand{\ba}{\begin{eqnarray}}
\newcommand{\ea}{\end{eqnarray}}
\newtheorem{theorem}{Theorem}
\newtheorem{corollary}{Corollary}
\newtheorem{definition}{Definition}
\newtheorem{observation}{Observation}
\newtheorem{remark}{Remark}
\newtheorem{lemma}{Lemma}
\begin{document}

\title{Family of bound entangled states on the boundary of Peres set} 

\author{Saronath Halder}
\email{saronath.halder@gmail.com}    
\affiliation{Department of Mathematics, Indian Institute of Science Education and Research Berhampur, Transit Campus, Government ITI, Berhampur 760010, India.} 
  
\author{Manik Banik}
\email{manik11ju@gmail.com}    
\affiliation{S.N. Bose National Center for Basic Sciences, Block JD, Sector III, Salt Lake, Kolkata 700098, India.}

\author{Sibasish Ghosh}
\email{sibasish@imsc.res.in} 
\affiliation{Optics \& Quantum Information Group, The Institute of Mathematical Sciences, HBNI, CIT Campus,
Taramani,  Chennai  600113, India}

\begin{abstract}
Bound entangled (BE) states are strange in nature: non-zero amount of free entanglement is required to create them but no free entanglement can be distilled from them under local operations and classical communication (LOCC). Even though  usefulness of such states has been shown in several information processing tasks, there exists no simple method to characterize them for an arbitrary composite quantum system.
Here we present a $(d-3)/2$-parameter family of BE states each with positive partial transpose (PPT). 
This family of PPT-BE states is introduced by constructing an unextendible product basis (UPB) in $\mathbb{C}^d\otimes\mathbb{C}^d$ with $d$ odd and $d\ge 5$. The range of each such PPT-BE state is contained in a $2(d-1)$ dimensional entangled subspace whereas the associated UPB-subspace is of dimension $(d-1)^2+1$. 
We further show that each of these PPT-BE states can be written as a convex combination of $(d-1)/2$ number of rank-4 PPT-BE states. Moreover, we prove that these rank-4 PPT-BE states are extreme points of the convex compact set $\mathcal{P}$ of all PPT states in $\mathbb{C}^d\otimes\mathbb{C}^d$, namely the {\it Peres} set. An interesting geometric implication of our result is that the convex hull of these rank-4 PPT-BE extreme points -- the  $(d-3)/2$-simplex -- is sitting on the boundary between the set $\mathcal{P}$ and the set of non-PPT states. We also discuss consequences of our construction in the context of quantum state discrimination by LOCC. 
\end{abstract}



\maketitle

\section{Introduction}
\label{sec:introduction}
Entanglement is one of the fundamental features of multipartite quantum systems. Though this very concept was recognized in the early days of quantum theory \cite{schro}, its physical meaning remains elusive  till date. Advent of quantum information theory identifies quantum entanglement as a useful resource for several information processing tasks [see \cite{horo1,gune} and references therein].
Thus characterization, detection, and quantification of quantum entanglement are of practical relevance and it is one of the main objectives to pursue research in quantum information theory. 
\par
It has been shown that the {\it quantum separability problem}, i.e., to verify whether an arbitrary density matrix of a given bipartite quantum system is entangled or separable is an NP-hard problem \cite{gurvits}. But we have some sufficient criteria to detect entanglement of a given state. One such useful criterion is negative partial transposition (NPT): given a bipartite quantum state if its partial transposition is non-positive then the state is entangled \cite{peres}. However, positive partial transposition (PPT) does not always guarantee separability, indeed there exist PPT-entangled states \cite{horo2}.
\par
Quantifying entanglement is another challenging aspect in entanglement theory and different operational as well as geometric measures have been introduced so far for this purpose. One such operationally motivated measure is {\it entanglement of distillation} \cite{ben2,ben1,ben3}. It is defined as the optimal rate of obtaining pure entangled state (singlet state) given many (asymptotically large) copies of the noisy (impure) entangled state under a sequence of local operations and classical communication (LOCC). However, this measure is not {\it faithful}\footnote{A {\it faithful} entanglement measure is strictly positive if and only if the state is entangled.} as it has been shown that PPT entangled states are undistillable and hence are also called bound entangled (BE) states \cite{horo2}. These PPT-BE states may not be the only type of BE states as it has been conjectured that there may exist NPT-BE states \cite{divin1,Dur}.
\par
During last few years usefulness of PPT-BE states has been extensively studied in different contexts, e.g., entanglement activation \cite{Horodecki(1),Vollbrecht}, probabilistic inter convertibility
among multipartite pure states \cite{Ishizaka-1,Ishizaka-2}, universal usefulness \cite{Masanes}, secure key distillation \cite{Horodecki(2),Horodecki(3),Horodecki(4)}, quantum metrology \cite{Czekaj}. Their connections to quantum steering \cite{Moroder} as well as to quantum nonlocality \cite{Vertesi} have also been established. However, due to the hardness of {\it quantum separability problem} there is no simple scheme to decide whether a given state is PPT-BE or not. We only know some examples and special constructions of such states \cite{horo2,be-2,be-3,be-4,be-5,be-6,be-7,ben4,divin2,Yu17,Sentis18}. One elegant construction comes from the structure of unextendible product bases (UPBs). It was shown that the normalized projector onto the subspace orthogonal to the UPB-subspace (subspace spanned by a UPB) is a PPT-BE state \cite{ben4,divin2}. 
\par
In order to explore the geometry of the set of PPT-BE states, researchers have considered following sets of density matrices: (i) the set $\mathcal{S}$ of all separable density matrices, (ii) the set $\mathcal{P}$ of all PPT density matrices. Both the sets are convex and compact. These sets are identical for any two-qubit system as well as for any qubit-qutrit system while for other systems  $\mathcal{P}$ strictly contains $\mathcal{S}$. Clearly, $\mathcal{S}$ contains only the rank-1 separable density matrices (pure product states) as its extreme points. However, except the two-qubit and the qubit-qutrit systems $\mathcal{P}$ contains not only the extreme points of $\mathcal{S}$ but also some additional extreme points. These new extreme points are nothing but PPT-BE states, and identifying such PPT-BE states is a troublesome task. Moreover, there exits some PPT-BE states, called {\it edge states}, living on the boundary of the set $\mathcal{P}$ and that of NPT states. To understand the extremely complicated structure of $\mathcal{P}$ it is important to identify the extremal PPT-BE states as well as the {\it edge states}. In last few years a considerable effort has been given addressing this question. A series of interesting results can be found in \cite{lewen, Szarek, leinaas1, leinaas2, leinaas3, augusiak, Sollid, chen, hansen, garberg, Chen1, Ha}. In \cite{lewen}, entanglement witness operators have been constructed for edge states. In \cite{Szarek}, it has been shown that for any bipartite system the ratio between the probabilities of finding a PPT state in the interior of the set of PPT mixed quantum states and at its boundary is equal to 2. Later in \cite{leinaas1, leinaas2, leinaas3, augusiak, Sollid, chen, hansen, garberg, Chen1, Ha} different methods have been proposed to identify the extremal PPT-BE states of the set $\mathcal{P}$. In particular, Chen and \DJ{}okovi\'{c} have shown that all rank-4 two-qutrit PPT-BE states can be constructed from unextendible product bases and all such PPT-BE states are extreme points of $\mathcal{P}$ \cite{chen}. For a higher dimensional system, the problem of identifying the extremal PPT-BE states as well as the edge states becomes more complicated, indeed very little is known so far. The main goal of the present work is to understand the set $\mathcal{P}$ for higher dimensional systems by exploring new classes of extreme points and edge states. In the following, we summarize the main findings of this paper: 
\begin{itemize}
\item We construct new UPB in $\mathbb{C}^d\otimes\mathbb{C}^d$, where $d\ge5$ and $d$ is odd. We also provide the tile structures corresponding to the UPB. The cardinality of such a UPB is $(d-1)^2+1$ and the corresponding entangled subspace is of dimension $2(d-1)$. The present construction is a generalization of the Tiles UPB in $\mathbb{C}^3\otimes\mathbb{C}^3$, given in \cite{ben4}.
\item We show that the PPT-BE state proportional to the full rank projector onto the entangled subspace allows a convex decomposition in terms of $(d-1)/2$ number of rank-4 PPT-BE states and hence is not an extreme point of the set $\mathcal{P}$. Interestingly, it turns out that the rank-4 PPT-BE states appeared in the above decomposition are extreme points of the set $\mathcal{P}$.
\item We further show that any convex mixture of the aforesaid extreme points are {\it edge states}. Geometrically, a $(d-3)/2$ simplex is formed by the aforesaid $(d-1)/2$ number of rank-4 extreme points and the simplex resides on the boundary between the set $\mathcal{P}$ and the set of NPT states. At this point the result of \cite{chen} is worthy to mention. It turns out that the entangled subspace corresponding to a two-qutrit UPB contains only one edge state and hence it is also an extreme point of $\mathcal{P}$.   

\item We study the cardinality of different locally indistinguishable sets (both {\it completable} and {\it uncompletable}) of orthogonal product states. We also discuss the merits of our construction in the context of orthogonal mixed state discrimination by LOCC.  
\end{itemize}

The manuscript is organized in the following way. In Section \ref{notation} we provide the notations used here and discuss about the prerequisite ideas. In Section \ref{results} we provide the main results where we first briefly review the Tiles UPB in $\mathbb{C}^3\otimes\mathbb{C}^3$ and then present the generalized Tiles UPB in $\mathbb{C}^5\otimes\mathbb{C}^5$ and $\mathbb{C}^d\otimes\mathbb{C}^d$ respectively. Then we provide the parametric family of PPT-BE edge states. Finally, in Section \ref{discussion} we make concluding remarks with some open problems for further research.

\section{Notations and preliminaries}\label{notation}
A complex Hilbert space of dimension $d$ is denoted by $\mathbb{C}^d$. To represent a (unnormalized) non-null vector, we use the {\it ket} notation $|v\rangle\in\mathbb{C}^d$, while $|\tilde{v}\rangle$ denotes the normalized vector parallel to $|v\rangle$. A linear operator maps a vector $|v\rangle\in\mathbb{C}^d$ to another vector $|v^\prime\rangle\in\mathbb{C}^d$. Rank of a linear operator $\mathcal{O}: \mathbb{C}^d\mapsto\mathbb{C}^d$ is dimension of its range denoted by $\mathcal{R}(\mathcal{O})$. Given a set of vectors, $S=\{|v_1\rangle,...,|v_k\rangle\}\subset\mathbb{C}^d$, their linear span form a subspace $Span(S):=\left\lbrace \sum_{i=1}^{k}\alpha_i|v_i\rangle~|~\alpha_i\in\mathbb{C},~\forall~i\right\rbrace$; sometime we will use the notation $|\sum_{i=1}^{k}\alpha_iv_i\rangle\equiv\sum_{i=1}^{k}\alpha_i|v_i\rangle$.   

Tensor product of two Hilbert spaces $\mathbb{C}^{d_A}$ and $\mathbb{C}^{d_B}$ is denoted by $\mathbb{C}^{d_A}\otimes\mathbb{C}^{d_B}$. Consider an $n$-dimensional subspace of $\mathbb{C}^{d_A}$ spanned by a set of orthogonal vectors $\{|u_i\rangle\}_{i=1}^{n}$ and an $m$-dimensional subspace of $\mathbb{C}^{d_B}$ spanned by $\{|v_i\rangle\}_{i=1}^{m}$. We say that $\{|u_1\rangle_A,...,|u_{n}\rangle_A\}\otimes\{|v_1\rangle_B,...,|v_{m}\rangle_B\}$ spans the subspace $\mathbb{C}^{n}\otimes\mathbb{C}^{m}$ of $\mathbb{C}^{d_A}\otimes\mathbb{C}^{d_B}$, where $\{|u_1\rangle_A,...,|u_{n}\rangle_A\}\otimes\{|v_1\rangle_B,...,|v_{m}\rangle_B\}\equiv\{|u_i\rangle_A\otimes|v_j\rangle_B\}_{i,j=1}^{n,m}$. 

A convex set $\mathcal{A}$ is a subset of an affine space that is closed under convex combinations, i.e., for any $a_i\in\mathcal{A}$,$\sum_{i=1}^{n}p_ia_i\in\mathcal{A}$, where $p_i\ge 0,~\forall~i$ and $\sum_{i=1}^{n}p_i=1$. A point $b\in\mathcal{A}$ is called an extreme point of $\mathcal{A}$ if it cannot be expressed as convex combination of other points in $\mathcal{A}$. The set of all extreme points of $\mathcal{A}$ is denoted by $\mathcal{E(A)}$. A subset in Euclidean space is called compact if it is closed (contains all limit points) and bounded. According to Krein-Milman theorem \cite{Krein}, any convex compact set of a finite dimensional vector space is equal to the convex hull of its extreme points. Thus, this theorem assures that while maximizing a linear functional over a convex compact set, it is sufficient to scan over only the extreme points instead of the whole convex compact set.

Every quantum system is associated with a Hilbert space. The state of a $d$-level quantum system is described by a density matrix $\rho$ which is a positive semidefinite, hermitian, trace-1 operator acting on $\mathbb{C}^d$. Set of all these density matrices $\mathcal{D}(\mathbb{C}^d)$ forms a convex compact subset of a real Euclidean space $\mathbb{R}^{(d^2-1)}$. The density matrices of rank-1, i.e., $\rho=|\psi\rangle\langle\psi|$, where $|\psi\rangle\in\mathbb{C}^d$ constitutes $\mathcal{E(D)}$. 

A bipartite quantum system is associated with a tensor product Hilbert space  $\mathcal{H}=\mathcal{H}_A\otimes\mathcal{H}_B=\mathbb{C}^{d_A}\otimes\mathbb{C}^{d_B}$; where $d_k$ is the dimension of $\mathcal{H}_k$, $k\in\{A,B\}$. A quantum state $\rho_{AB}\in\mathcal{D}(\mathbb{C}^{d_A}\otimes\mathbb{C}^{d_B})$ is called a separable state (or product state) if it can be written as, $\rho_{AB}=\sum_ip_i\rho_A^i\otimes\rho_B^i$, where $\rho_k^i\in\mathcal{D}(\mathbb{C}^{d_k}),~\forall~i,k$; $p_i\in\{0,1\}$ and $\sum_ip_i=1$. States that cannot be expressed in this form are entangled. The set of all separable states $\mathcal{S}(\mathbb{C}^{d_A}\otimes\mathbb{C}^{d_B})$ is also a convex compact set which is strictly contained in $\mathcal{D}$, i.e., $\mathcal{S}\subset\mathcal{D}$. Note that $\mathcal{E(D)}$ is constituted by both pure product and entangled states while only the pure product states constitute $\mathcal{E(S)}$.

To certify entanglement of a bipartite state, partial transpose operation plays a crucial role. Partial transpose of a density matrix 
$\rho_{AB}$ is denoted by $\rho_{AB}^{T_k}$; where $T_k$ is transposition operation in a chosen basis with respect to $k^{th}$ party.
If $\rho_{AB}^{T_k}\ngeq 0$, $\rho_{AB}$ must be entangled \cite{peres}. However, $\rho_{AB}^{T_k}\ge 0$ guarantees separability of a given density matrix for the systems $\mathbb{C}^2\otimes\mathbb{C}^2$, $\mathbb{C}^3\otimes\mathbb{C}^2$, and $\mathbb{C}^2\otimes\mathbb{C}^3$. In fact, in higher dimensions, there exist entangled states with positive partial transpose \cite{horo2}. For a composite Hilbert space, the states having positive partial transpose again form a convex compact set $\mathcal{P}(\mathbb{C}^{d_A}\otimes\mathbb{C}^{d_B})$ (say), also known as the {\it Peres set} \cite{leinaas1}. Clearly, $\mathcal{S}\subseteq\mathcal{P}\subset\mathcal{D}$, with set equality holds true for lower dimensions, i.e., for any two-qubit and qubit-qutrit systems. Thus, $\mathcal{E(P)}$ is exactly same as $\mathcal{E(S)}$ in these dimensions and $\mathcal{E(P)}$ is strictly bigger than $\mathcal{E(S)}$ for all other dimensions. So, the nontriviality of characterizing $\mathcal{E(P)}$ lies in the fact that it contains not only pure product states but also some additional PPT-BE states. One of the goals of the present work is to understand these additional PPT-BE states of $\mathcal{E(P)}$. Here, these PPT-BE states are connected to UPBs, definition of which is given below.

\begin{definition}
Consider a bipartite quantum system $\mathcal{H}=\mathbb{C}^{d_A}\otimes\mathbb{C}^{d_B}$. A complete orthogonal product basis (COPB) is a set of orthogonal product states that spans $\mathcal{H}$ while an incomplete orthogonal product basis (ICOPB) is a set of pure orthogonal product states that spans a subspace $\mathcal{H}_S$ of $\mathcal{H}$. An uncompletable product basis (UCPB) is an ICOPB whose complementary subspace $\mathcal{H}_S^\perp$ contains fewer pairwise orthogonal pure product states than its dimension. An unextendible product basis (UPB) is a UCPB, whose complementary subspace $\mathcal{H}_S^\perp$ contains no product states. 
\end{definition}

Note that $\mathcal{H}_S^\perp$, in case of a UPB, is a fully entangled subspace. Here, we denote such a subspace as $\mathcal{H}_E$. Bennett {et al.} showed that the normalized projector onto the subspace $\mathcal{H}_E$ for a given UPB is a PPT-BE state \cite{ben4}. These PPT-BE states are known to be {\it edge states} as they reside on the boundary between the set of PPT-BE states and that of the NPT states. In the following, we recall the mathematical definition of an {\it edge state} \cite{lewen}.

\begin{definition}\label{definition-2}
A PPT-BE state $\delta_{AB}$ is an edge state {\it iff} there exists no product state $|\varphi_A\rangle|\varphi_B\rangle$ and $\epsilon>0$, such that $\delta_{AB}-\epsilon\mathbb{P}(|\varphi_A\rangle|\varphi_B\rangle)$ is positive or does have a PPT, where $\mathbb{P}(\cdot)$ denotes the one dimensional  projection operator.  
\end{definition}

A necessary and sufficient criterion for an edge state is given in \cite{lewen}. We recall that criterion in the following remark.

\begin{remark}\label{remark-1}
A PPT-BE state $\delta_{AB}$ is an edge state {\it iff} there exists no $|\varphi_A\rangle|\varphi_B\rangle\in\mathcal{R}(\delta_{AB})$, such that $|\varphi_A\rangle|\varphi_B^\star\rangle\in\mathcal{R}(\delta_{AB}^{T_B})$, where $|\varphi_B^\star\rangle$ is the complex-conjugate of $|\varphi_B\rangle$.
\end{remark}

Obviously, any PPT-BE state belonging to $\mathcal{E(P)}$ is an {\it edge state}, but the converse is not true in general. Still, identifying the {\it edge states} is important to decipher the complicated geometrical structure of the set $\mathcal{P}$. This is another aspect of the present study. In particular, we provide parametric family of edge states and the number of parameters increase linearly with the dimension of subsystems. 

Another important aspect of UPBs is that they exhibit the phenomenon {\it quantum nonlocality without entanglement} \cite{ben99} and hence the orthogonal pure product states within a UPB cannot be perfectly distinguished by LOCC \cite{ben4, divin2}. Suppose, no party can perform {\it nontrivial} and {\it orthogonality preserving} measurement\footnote{If not all the positive operator valued measure (POVM) elements describing a measurement are proportional to identity operator then the measurement is a nontrivial measurement. Moreover, while distinguishing a given set of orthogonal states if the post measurement states remain pairwise orthogonal then it is an orthogonality preserving measurement.} \cite{Groisman, wal02, ye07} in order to distinguish a set of orthogonal pure product states. Then this guarantees that the states of the given set cannot be distinguished perfectly by LOCC. Indeed, not even a single state from that set can be perfectly identified by such measurements. Again, in the context of orthogonal mixed state discrimination, UPBs play a crucial role. It has been shown that any state supported in $\mathcal{H}_E$ cannot be {\it conclusively} distinguished from the mixed state proportional to the projector onto the UPB-subspace \cite{ban1}. Our construction leads to a few interesting observations regarding state discrimination problem by LOCC.    

\section{Results}\label{results}
Finding all the extreme points of the set $\mathcal{P}$ of a given system $\mathbb{C}^{d_A}\otimes\mathbb{C}^{d_B}$ is a highly nontrivial task. This is because of two reasons: firstly, identifying a PPT-BE states is itself a nontrivial job, and then determining whether such a PPT-BE state is an extreme point of the set $\mathcal{P}$ is the second hurdle. Leinaas and co-authors, for the first time,  derived a necessary and sufficient condition for uniquely identifying the extreme points of $\mathcal{P}$ \cite{leinaas1}. However, for useful implication of their condition it requires an algorithmic search to detect any such point. Subsequently, this method has been studied in several bipartite and multipartite systems \cite{leinaas2,leinaas3}. Later, these works motivated Chen and \DJ{}okovi\'{c} to come up with an analytical approach to explore the nontrivial extreme points of $\mathcal{P}$ \cite{chen}. In particular, using the techniques of projective geometry, they proved that any rank-4 PPT-BE state in $\mathbb{C}^3\otimes\mathbb{C}^3$ is an extreme point of $\mathcal{P}$. As an immediate extension of this result, we give the following lemma.

\begin{lemma}\label{lemma1}
	Consider a rank-4 PPT-BE state $\rho_{AB}$ of $\mathbb{C}^{d_A}\otimes\mathbb{C}^{d_B}$ with $d_A,d_B\ge 3$. Assume that the range of $\rho_{AB}$ is supported in $\mathcal{H}^\prime_A\otimes\mathcal{H}^\prime_B$, where $\mathcal{H}^\prime_A~ (\mathcal{H}^\prime_B)$ is a three dimensional subspace of $\mathbb{C}^{d_A}~(\mathbb{C}^{d_B})$ and in $\mathcal{H}^\prime_A\otimes\mathcal{H}^\prime_B$, the tensor product is the induced version of `$\otimes$' used for the full Hilbert space $\mathbb{C}^{d_A}\otimes\mathbb{C}^{d_B}$. Then the state $\rho_{AB}$ is an extreme point of the set $\mathcal{P}$ of $\mathbb{C}^{d_A}\otimes\mathbb{C}^{d_B}$.
\end{lemma}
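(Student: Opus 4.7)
The plan is to reduce the problem to the $\mathbb{C}^3\otimes\mathbb{C}^3$ case and invoke the Chen--\DJ{}okovi\'c theorem that every rank-$4$ PPT-BE state in $\mathbb{C}^3\otimes\mathbb{C}^3$ is an extreme point of the associated Peres set. Concretely, I would suppose for contradiction that $\rho_{AB}=\lambda\sigma_1+(1-\lambda)\sigma_2$ for some $\lambda\in(0,1)$ and distinct $\sigma_1,\sigma_2\in\mathcal{P}(\mathbb{C}^{d_A}\otimes\mathbb{C}^{d_B})$, and then push this decomposition down to the induced $\mathbb{C}^3\otimes\mathbb{C}^3$ subsystem on which $\rho_{AB}$ lives.

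First I would use the elementary fact that, for a convex combination of positive semidefinite operators, the ranges of the summands are contained in the range of the sum. Since $\mathcal{R}(\rho_{AB})\subseteq\mathcal{H}'_A\otimes\mathcal{H}'_B$ by hypothesis, this immediately forces $\mathcal{R}(\sigma_i)\subseteq\mathcal{H}'_A\otimes\mathcal{H}'_B$ for $i=1,2$. Hence each $\sigma_i$ can be viewed unambiguously as a density matrix on $\mathcal{H}'_A\otimes\mathcal{H}'_B\cong\mathbb{C}^3\otimes\mathbb{C}^3$.

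Next I would verify that this restriction preserves the PPT property. To do this, pick an orthonormal basis $\{|a_j\rangle\}_{j=1}^{d_A}$ of $\mathbb{C}^{d_A}$ whose first three vectors span $\mathcal{H}'_A$, and similarly a basis $\{|b_k\rangle\}_{k=1}^{d_B}$ whose first three vectors span $\mathcal{H}'_B$. Because $\sigma_i$ is supported on $\mathcal{H}'_A\otimes\mathcal{H}'_B$, its matrix entries $\langle a_j b_k|\sigma_i|a_{j'}b_{k'}\rangle$ vanish unless $j,j',k,k'\in\{1,2,3\}$, and the same is true for $\sigma_i^{T_B}$ computed in this basis. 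Therefore the partial transpose of $\sigma_i$ taken in the full space coincides, block by block, with the partial transpose of $\sigma_i$ taken intrinsically on $\mathcal{H}'_A\otimes\mathcal{H}'_B$. Positivity of the former (our assumption $\sigma_i\in\mathcal{P}$) then transfers to the latter.

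Finally, $\rho_{AB}$ itself is a rank-$4$ PPT state on $\mathcal{H}'_A\otimes\mathcal{H}'_B\cong\mathbb{C}^3\otimes\mathbb{C}^3$ (rank is intrinsic, PPT passes by the same argument as above), and it is entangled there because separability in the small space would lift to a product-state decomposition in the large space, contradicting the assumption that $\rho_{AB}$ is bound entangled. Hence $\rho_{AB}$ is a rank-$4$ PPT-BE state of $\mathbb{C}^3\otimes\mathbb{C}^3$, and by the result of Chen and \DJ{}okovi\'c it is an extreme point of $\mathcal{P}(\mathbb{C}^3\otimes\mathbb{C}^3)$. This forces $\sigma_1=\sigma_2=\rho_{AB}$, which contradicts our assumption and establishes extremality in the full Peres set. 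The only delicate step, and the one on which the argument hinges, is the identification of the partial transpose in the ambient space with the one on the three-dimensional local subspaces; everything else is bookkeeping.
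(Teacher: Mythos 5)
Your proposal is correct and follows essentially the same route as the paper: assume a nontrivial convex decomposition in $\mathcal{P}(\mathbb{C}^{d_A}\otimes\mathbb{C}^{d_B})$, use the fact that the ranges of the summands lie in $\mathcal{R}(\rho_{AB})\subseteq\mathcal{H}'_A\otimes\mathcal{H}'_B$ to compress everything to the two-qutrit subspace, and contradict the Chen--\DJ{}okovi\'c extremality result there. You are in fact somewhat more careful than the paper, which leaves implicit the step you rightly flag as delicate --- that positivity of the partial transpose survives the restriction to the local three-dimensional subspaces --- and your basis-adapted block argument for that step is sound.
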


\begin{proof}
In contradiction to the statement of the above lemma, let us assume that $\rho_{AB}$ is not an extreme point of $\mathcal{P}$. Therefore, $\rho_{AB}$ allows at least one decomposition of the form $\rho_{AB}=p\eta_{AB}+(1-p)\eta_{AB}^\prime$, where $p\in(0,1)$ and $\rho_{AB}\ne\eta_{AB}$, $\rho_{AB}\ne\eta_{AB}^\prime$ and $\eta_{AB}\ne\eta_{AB}^\prime$. The ranges of $\eta_{AB}$ and $\eta_{AB}^\prime$ are fully contained in the range of $\rho_{AB}$. Consider the projector $\mathbb{P}^\prime$ onto the subspace $\mathcal{H}^\prime_A\otimes\mathcal{H}^\prime_B$. Since $\mathbb{P}^\prime\sigma\mathbb{P}^\prime=\sigma$, for $\sigma\in\{\rho_{AB},\eta_{AB},\eta_{AB}^\prime\}$, therefore $\mathbb{P}^\prime\rho_{AB}\mathbb{P}^\prime=p\mathbb{P}^\prime\eta_{AB}\mathbb{P}^\prime+(1-p)\mathbb{P}^\prime\eta_{AB}^\prime\mathbb{P}^\prime$, which contradicts the result of Chen et al. \cite{chen} that any PPT-BE state of rank-4 in $\mathbb{C}^3\otimes\mathbb{C}^3$ is an extrema point. 
\end{proof}

Next we construct a $(d-3)/2$-parameter family of PPT-BE states in $\mathbb{C}^{d}\otimes\mathbb{C}^{d}$ for odd $d$ and corresponding UPBs. Our construction is a generalization of Tiles UPB in $\mathbb{C}^{3}\otimes\mathbb{C}^{3}$ introduced by Bennett et al \cite{ben1}. So, before presenting our main results we first briefly review different aspects of the Tiles UPB in $\mathbb{C}^{3}\otimes\mathbb{C}^{3}$.
\subsection{Tiles UPB in $\mathbb{C}^{3}\otimes\mathbb{C}^{3}$}\label{sec3x3}
The orthogonal pure product states forming the Tiles UPB in $\mathbb{C}^{3}\otimes\mathbb{C}^{3}$ are given below:   
\begin{equation}\label{UPB-3}
\begin{array}{c}
|\psi_{1}\rangle=|0\rangle|0 - 1\rangle,~~
|\psi_{2}\rangle=|2\rangle|1 - 2\rangle,\\[1ex]

|\psi_{3}\rangle=|0 - 1\rangle|2\rangle,~~
|\psi_{4}\rangle=|1 - 2\rangle|0\rangle,\\[1ex]

|S\rangle=|0+1+2\rangle|0+1+2\rangle.\\[1ex]
\end{array}
\end{equation}

\begin{figure}[ht]
	\includegraphics{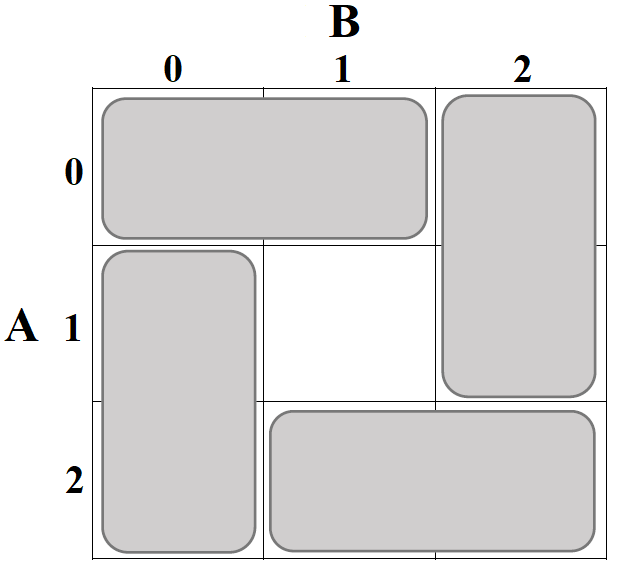}
	\caption{Tile structure of a UPB in $\mathbb{C}^3\otimes\mathbb{C}^3$.}\label{3x3}
\end{figure}

The states are written without the normalization coefficients and through out the paper we follow this convention unless stated otherwise. We say a subspace spanned by the product states forming a UPB is a UPB-subspace ($\mathcal{H}_U$). The subspace orthogonal to $\mathcal{H}_U$ contains no product state and hence is an entangled-subspace ($\mathcal{H}_E$). The normalized projector on $\mathcal{H}_E$ is given by-
\begin{equation}\label{bound-3}
\rho_3=\frac{1}{4}\left(\mathbb{I}_3\otimes\mathbb{I}_3-\sum_{i=1}^4|\tilde{\psi}_i\rangle\langle\tilde{\psi}_i|-|\tilde{S}\rangle\langle \tilde{S}|\right),
\end{equation}
where $\mathbb{I}_n$ denotes $n\times n$ identity matrix, $|\tilde{\psi}_i\rangle$'s and $|\tilde{S}\rangle$ are normalized states of that given in Eq.(\ref{UPB-3}). We use the subscript of $\rho_3$ to indicate that it is a density matrix on $\mathbb{C}^{3}\otimes\mathbb{C}^{3}$. Since, $\mathcal{H}_E$ contains no product state therefore $\rho_3$ is an entangled state. To prove positivity of $\rho_3$ under partial transpose we make use of the following observation taken from \cite{divin1}. 

\begin{observation}\label{obs1}
	Under transposition on Alice's side any pure product state $|\alpha_A\rangle\langle\alpha_A|\otimes|\alpha_B\rangle\langle\alpha_B|$ becomes $|\alpha^\ast_A\rangle\langle\alpha^\ast_A|\otimes|\alpha_B\rangle\langle\alpha_B|$ and hence a set of orthogonal product states are mapped into another set of orthogonal product states. 
\end{observation}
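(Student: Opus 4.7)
The plan is to unfold the definition of partial transposition directly and then verify orthogonality via the complex conjugate of the inner product. First, I would fix a computational basis $\{|i\rangle\}$ on Alice's side (the same basis used to define $T_A$) and write $|\alpha_A\rangle = \sum_i a_i |i\rangle$, so that
\[
|\alpha_A\rangle\langle\alpha_A| = \sum_{i,j} a_i a_j^* \, |i\rangle\langle j|.
\]
Applying $T_A$, which by definition maps $|i\rangle\langle j| \mapsto |j\rangle\langle i|$, and re-labelling the dummy indices $i \leftrightarrow j$ gives $\sum_{i,j} a_i^* a_j |i\rangle\langle j|$, which is precisely $|\alpha_A^\ast\rangle\langle\alpha_A^\ast|$ for $|\alpha_A^\ast\rangle = \sum_i a_i^* |i\rangle$. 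Since $T_A$ acts as the identity on Bob's tensor factor, this establishes the first half of the observation.

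For the second half, I would start from two orthogonal pure product states $|\alpha_A\rangle|\alpha_B\rangle$ and $|\beta_A\rangle|\beta_B\rangle$, so that at least one of $\langle\alpha_A|\beta_A\rangle$ and $\langle\alpha_B|\beta_B\rangle$ vanishes. After $T_A$, the inner product of the transposed product states factorises as
\[
\langle\alpha_A^\ast|\beta_A^\ast\rangle \, \langle\alpha_B|\beta_B\rangle.
\]
The key elementary identity is $\langle\alpha_A^\ast|\beta_A^\ast\rangle = \overline{\langle\alpha_A|\beta_A\rangle}$, which follows from the same coordinate expansion: if $|\beta_A\rangle = \sum_i b_i |i\rangle$, then $\langle\alpha_A^\ast|\beta_A^\ast\rangle = \sum_i a_i b_i^* = \overline{\sum_i a_i^* b_i}$. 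Hence complex conjugation on Alice's side does not change which Alice-inner-products are zero, and the product of inner products above still vanishes, so orthogonality survives the operation. Applying this pairwise to every pair of states in a given orthogonal product set yields the claim.

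This statement is really just bookkeeping, so there is no true obstacle; the only thing worth being careful about is that $|\alpha_A^\ast\rangle$ is basis-dependent in exactly the same way that $T_A$ is, so the two must be defined with respect to the same basis for the identification in the first paragraph to hold. I would state that convention explicitly at the start of the proof to avoid any ambiguity later when the observation is used in the PPT arguments of Section~\ref{results}.
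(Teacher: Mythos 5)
Your proof is correct and complete: the basis expansion showing $T_A(|\alpha_A\rangle\langle\alpha_A|)=|\alpha_A^\ast\rangle\langle\alpha_A^\ast|$, the identity $\langle\alpha_A^\ast|\beta_A^\ast\rangle=\overline{\langle\alpha_A|\beta_A\rangle}$, and the remark that the conjugation and the transposition must be taken in the same basis are exactly the points that matter. The paper itself gives no proof of this observation --- it simply quotes it from DiVincenzo et al. --- so your write-up supplies the standard elementary argument that the authors leave implicit, and there is nothing to add or correct.
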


This guarantees the positivity of $\rho_3$ under partial transposition. If we remove the state $|S\rangle$ from Eq.(\ref{UPB-3}), the remaining four states together with the following five product states form a complete orthogonal product basis (COPB): 
\begin{equation}\label{UPB-3-C-onb}
\begin{array}{c}
|\psi_{5}\rangle=|0\rangle|0 + 1\rangle,~~
|\psi_{6}\rangle=|2\rangle|1 + 2\rangle,\\[1ex]

|\psi_{7}\rangle=|0 + 1\rangle|2\rangle,~~
|\psi_{8}\rangle=|1 + 2\rangle|0\rangle,\\[1ex]

|\psi_{9}\rangle=|1\rangle|1\rangle.\\[1ex]
\end{array}
\end{equation}

When the state $|S\rangle$ is considered with the set of states $\{|\psi_i\}_{i=1}^4$, it stops inclusion of any the above states to form a COPB. In other words it plays the role of `stopper' to construct the UPB. Using the states of Eq.(\ref{UPB-3-C-onb}) one can construct four pairwise orthogonal entangled states $\{|\phi_i\rangle\}_{i=1}^4$ that span $\mathcal{H}_E$. One such construction is given as follows-   
\begin{equation}\label{UPB-3-ent}
\begin{array}{l}
|\phi_{1}\rangle =|\psi_{5}\rangle+|\psi_{6}\rangle-|\psi_{7}\rangle-|\psi_{8}\rangle,\\[1ex]
|\phi_{2}\rangle =|\psi_{5}\rangle-|\psi_{6}\rangle+|\psi_{7}\rangle-|\psi_{8}\rangle,\\[1ex]
|\phi_{3}\rangle =|\psi_{5}\rangle-|\psi_{6}\rangle-|\psi_{7}\rangle+|\psi_{8}\rangle,\\[1ex]
|\phi_{4}\rangle =a_1(|\psi_{5}\rangle+|\psi_{6}\rangle+|\psi_{7}\rangle+|\psi_{8}\rangle)+a_2|\psi_{9}\rangle.
\end{array}
\end{equation}
Note that the orthogonality of the states $\{|\psi_{i}\rangle\}_{i=1}^4$ with that of Eq.(\ref{UPB-3-ent}) is immediate from construction. Orthogonality of 
$\{|\phi_{i}\rangle\}_{i=1}^3$ with $|S\rangle$ is also assured due to the construction but to make $|\phi_{4}\rangle$ orthogonal to $|S\rangle$ we need to fix the values of $a_1, a_2$ accordingly.  Now the PPT-BE state of Eq.(\ref{bound-3}) can be rewritten as $\rho_3=1/4\sum_{i=1}^4|\tilde{\phi}_i\rangle\langle\tilde{\phi}_i|$. With reference to this context, it is quite worthy to mention that $\rho^\prime_3=\sum_{i=1}^4p_i|\tilde{\phi}_i\rangle\langle\tilde{\phi}_i|$ is NPT, where $p_i$'s are probabilities (excluding the case when all $p_i$'s are equal) \cite{chen}. Indeed all such states are one-copy distillable \cite{chen1}. Next we generalize this tile structure to higher dimensional Hilbert spaces and explore different intriguing aspects of such generalization.

\subsection{Tiles UPB in $\mathbb{C}^{5}\otimes\mathbb{C}^{5}$}\label{sec5x5}
In $\mathbb{C}^{5}\otimes\mathbb{C}^{5}$, it is possible to construct a COPB based on the title structure given in Fig.\ref{5x5}. If we choose a suitable stopper $|S\rangle\in\mathbb{C}^{5}\otimes\mathbb{C}^{5}$ and remove the product states that are not orthogonal to $|S\rangle$ then the remaining states of the COPB along with $|S\rangle$ form a UPB in $\mathbb{C}^{5}\otimes\mathbb{C}^{5}$. Such a UPB is given by-

\begin{equation}\label{UPB-5}
\begin{array}{l}
|\psi_{1}\rangle=|0\rangle|0 - 1 + 2 - 3\rangle,~~~
|\psi_{2}\rangle=|0\rangle|0 + 1 - 2 - 3\rangle,\\[1ex]
|\psi_{3}\rangle=|0\rangle|0 - 1 - 2 + 3\rangle,~~~

|\psi_{4}\rangle=|4\rangle|1 - 2 + 3 - 4\rangle,\\[1ex]
|\psi_{5}\rangle=|4\rangle|1 + 2 - 3 - 4\rangle,~~~
|\psi_{6}\rangle=|4\rangle|1 - 2 - 3 + 4\rangle,\\[1ex]

|\psi_{7}\rangle=|0 - 1 + 2 - 3\rangle|4\rangle,~~~
|\psi_{8}\rangle=|0 + 1 - 2 - 3\rangle|4\rangle,\\[1ex]
|\psi_{9}\rangle=|0 - 1 - 2 + 3\rangle|4\rangle,~~~

|\psi_{10}\rangle=|1 - 2 + 3 - 4\rangle|0\rangle,\\[1ex]
|\psi_{11}\rangle=|1 + 2 - 3 - 4\rangle|0\rangle,~~
|\psi_{12}\rangle=|1 - 2 - 3 + 4\rangle|0\rangle,\\[1ex]

|\psi_{13}\rangle=|1\rangle|1 - 2\rangle,~~~~~~~~~~~~
|\psi_{14}\rangle=|3\rangle|2 - 3\rangle,\\[1ex]
|\psi_{15}\rangle=|1 - 2\rangle|3\rangle,~~~~~~~~~~~~
|\psi_{16}\rangle=|2 - 3\rangle|1\rangle,\\[1ex]

|S\rangle=|0+1+2+3+4\rangle|0+1+2+3+4\rangle.
\end{array}
\end{equation}
\begin{figure}[h!]
	\includegraphics{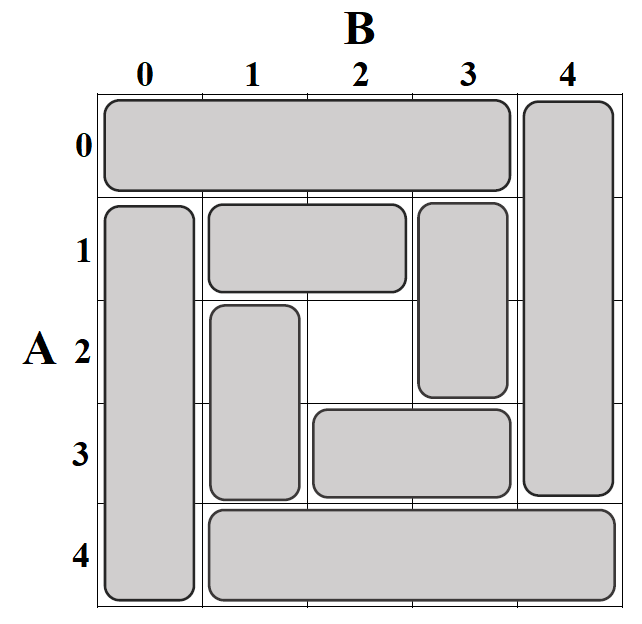}
	\caption{Tile structure of a UPB in $\mathbb{C}^5\otimes\mathbb{C}^5$.}\label{5x5}
\end{figure}

Note that the cardinality of a UPB depends on the choice of stopper state. We say the product states that are not orthogonal to the stopper $|S\rangle$ but orthogonal to all the other states in Eq.(\ref{UPB-5}), as {\it missing states}. The missing states $\{|\psi_i\rangle\}_{i=17}^{25}$ are given by-

\begin{equation}\label{rest-5}
\begin{array}{l}
|\psi_{17}\rangle=|0\rangle|0 + 1 + 2 + 3\rangle,~~
|\psi_{18}\rangle=|4\rangle|1 + 2 + 3 + 4\rangle,\\[1ex]
|\psi_{19}\rangle=|0 + 1 + 2 + 3\rangle|4\rangle,~~
|\psi_{20}\rangle=|1 + 2 + 3 + 4\rangle|0\rangle,\\[1ex]
|\psi_{21}\rangle=|1\rangle|1 + 2\rangle,~~
|\psi_{22}\rangle=|3\rangle|2 + 3\rangle,\\[1ex]

|\psi_{23}\rangle=|1 + 2\rangle|3\rangle,~~
|\psi_{24}\rangle=|2 + 3\rangle|1\rangle,~~
|\psi_{25}\rangle=|2\rangle|2\rangle.
\end{array}
\end{equation}

Note that in case of the COPB, each tile of the outer most layer corresponds to four pairwise orthogonal product states while each tile of the inner layer corresponds to two pairwise orthogonal product states the middle one corresponds to the state $|\psi_{25}\rangle=|2\rangle|2\rangle$. Because of the non-orthogonality with the stopper, one have to remove a pure product state from each tile in order to build the UPB. Hence, there are 17 states in the present UPB. This construction is new and different from that of \cite{divin2}.  

Clearly, the states $\{|\psi_i\rangle\}_{i=1}^{16}$ of Eq.(\ref{UPB-5}) and the states $\{|\psi_i\rangle\}_{i=17}^{25}$ of Eq.(\ref{rest-5}) together form a COPB in $\mathbb{C}^{5}\otimes\mathbb{C}^{5}$. Such a class of COPB in $\mathbb{C}^{d}\otimes\mathbb{C}^{d}$ and their local indistinguishability is discussed in \cite{halder}. Notice the structure given in Fig.(\ref{5x5}). To include an orthogonal product state to any of the tiles,  the new state must be orthogonal to the existing states of that tile and the stopper $|S\rangle$. But it is not possible and hence guarantees the unextendibility. Next, we construct the entangled basis $\{|\phi_i\rangle\}_{i=1}^8$ that spans the entangled subspace $\mathcal{H}_E$ of $\mathbb{C}^5\otimes\mathbb{C}^5$:

\begin{equation}\label{ent5}
\begin{array}{l}
|\phi_{1}\rangle=|\psi_{17}\rangle+|\psi_{18}\rangle-|\psi_{19}\rangle-|\psi_{20}\rangle,\\[1ex]
|\phi_{2}\rangle=|\psi_{17}\rangle-|\psi_{18}\rangle+|\psi_{19}\rangle-|\psi_{20}\rangle,\\[1ex]

|\phi_{3}\rangle=|\psi_{17}\rangle-|\psi_{18}\rangle-|\psi_{19}\rangle+|\psi_{20}\rangle,\\[1ex]

|\phi_{4}\rangle=a_3(|\psi_{17}\rangle+|\psi_{18}\rangle+|\psi_{19}\rangle+|\psi_{20}\rangle)\\[1ex]
~+a_4(a_2(|\psi_{21}\rangle+|\psi_{22}\rangle+|\psi_{23}\rangle+|\psi_{24}\rangle)-a_1|\psi_{25}\rangle),\\[1ex]

|\phi_{5}\rangle=|\psi_{21}\rangle+|\psi_{22}\rangle-|\psi_{23}\rangle-|\psi_{24}\rangle,\\[1ex]
|\phi_{6}\rangle=|\psi_{21}\rangle-|\psi_{22}\rangle+|\psi_{23}\rangle-|\psi_{24}\rangle,\\[1ex]

|\phi_{7}\rangle=|\psi_{21}\rangle-|\psi_{22}\rangle-|\psi_{23}\rangle+|\psi_{24}\rangle,\\[1ex]
|\phi_{8}\rangle=a_1(|\psi_{21}\rangle+|\psi_{22}\rangle+|\psi_{23}\rangle+|\psi_{24}\rangle)+a_2|\psi_{25}\rangle.
\end{array}
\end{equation}

The states $\{|\phi_i\rangle\}_{i=1}^8$ are pairwise orthogonal by construction. The pairwise orthogonality also holds when we consider $\{|\psi_i\rangle\}_{i=1}^{16}$ of Eq.(\ref{UPB-5}) and $\{|\phi_i\rangle\}_{i=1}^8$ of Eq.(\ref{ent5}) together. Except $|\phi_4\rangle$ and $|\phi_8\rangle$ the other states of Eq.(\ref{ent5}) are also orthogonal to the stopper $|S\rangle$ by construction. But to make $|\phi_4\rangle$ and $|\phi_8\rangle$ orthogonal to $|S\rangle$ the coefficients $a_i$'s are chosen judicially. The PPT-BE state corresponding to UPB of Eq.(\ref{UPB-5}) is of rank-8. The explicit form the state is given by:

\begin{eqnarray}\label{rho5-1}
\rho_5&=&\frac{1}{8}\left(\mathbb{I}_5\otimes\mathbb{I}_5-\sum_{i=1}^{16}|\tilde{\psi_{i}}\rangle\langle\tilde{\psi_{i}}|-|\tilde{S}\rangle\langle \tilde{S}|\right), \\\label{rho5-2}
&=&\frac{1}{8}\sum_{i=1}^{8}|\tilde{\phi_{i}}\rangle\langle\tilde{\phi_{i}}|.
\end{eqnarray}

From Observation \ref{obs1} it follows that $\rho_5$ is a PPT state. Furthermore, $\rho_5$ is supported in an entangled subspace complementary to a UPB-subspace. Therefore according to the Remark \ref{remark-1} it is an edge state. Note that, the PPT-BE state corresponding to the Tiles UPB of $\mathbb{C}^3\otimes\mathbb{C}^3$ is not only an edge state but also an extreme point of the set $\mathcal{P}$ of $\mathbb{C}^3\otimes\mathbb{C}^3$ \cite{chen}. In the following we show that $\rho_5$ is not an extreme point of the set $\mathcal{P}$ of $\mathbb{C}^5\otimes\mathbb{C}^5$, though it corresponds to a Tiles UPB.  

\begin{theorem}\label{theorem-1}
	The PPT-BE state $\rho_5$ allows a decomposition of the form $\rho_5=\frac{1}{2}\sigma_1+\frac{1}{2}\sigma_2$, where both $\sigma_1,\sigma_2$ are rank-4 PPT-BE extreme points of the set  $\mathcal{P}$ of $\mathbb{C}^5\otimes\mathbb{C}^5$.
\end{theorem}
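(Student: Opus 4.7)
My strategy is to exhibit two $\mathbb{C}^3\otimes\mathbb{C}^3$ subspaces of $\mathbb{C}^5\otimes\mathbb{C}^5$ whose intersections with $\mathcal{H}_E$ form orthogonal $4$-dimensional complements inside $\mathcal{H}_E$, and take $\sigma_1,\sigma_2$ as the uniform mixed states on these intersections. Set $W=\text{span}\{|0\rangle,|1+2+3\rangle,|4\rangle\}$ and $W'=\text{span}\{|1\rangle,|2\rangle,|3\rangle\}$, and define $V_1:=W\otimes W$ and $V_2:=W'\otimes W'$. Both are isomorphic to $\mathbb{C}^3\otimes\mathbb{C}^3$, and since $W\cap W'=\text{span}\{|1+2+3\rangle\}$, we have $V_1\cap V_2=\text{span}\{|1+2+3\rangle|1+2+3\rangle\}$, one-dimensional. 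Put $R_i:=V_i\cap\mathcal{H}_E$ and $\sigma_i:=\tfrac{1}{4}P_{R_i}$, where $P_{R_i}$ is the orthogonal projector onto $R_i$.

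First I would verify the key algebraic identities. Since $|0+1+2+3\rangle=|0\rangle+|1+2+3\rangle\in W$ (and similarly $|1+2+3+4\rangle\in W$), the missing states $|\psi_{17}\rangle,\ldots,|\psi_{20}\rangle$ lie in $V_1$, while $|\psi_{21}\rangle,\ldots,|\psi_{25}\rangle$ lie in $V_2$. Expanding in the computational basis gives $\sum_{i=21}^{24}|\psi_i\rangle+|\psi_{25}\rangle=|1+2+3\rangle|1+2+3\rangle$, so the projection of $|S\rangle$ onto $V_2$ equals $|1+2+3\rangle|1+2+3\rangle$; one similarly checks $|S\rangle=\sum_{i=17}^{25}|\psi_i\rangle$. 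Hence $V_1\cap\text{span(missing)}=\text{span}\{|\psi_{17}\rangle,\ldots,|\psi_{20}\rangle,|1+2+3\rangle|1+2+3\rangle\}$ and $V_2\cap\text{span(missing)}=\text{span}\{|\psi_{21}\rangle,\ldots,|\psi_{25}\rangle\}$, each of dimension $5$, and since $|S\rangle$ belongs to both we get $\dim R_1=\dim R_2=4$. For orthogonality $R_1\perp R_2$: any $v_1\in R_1$ and $v_2\in R_2$ satisfy $\langle v_1|v_2\rangle$ equal to a scalar multiple of the overlap of $|1+2+3\rangle|1+2+3\rangle$ with $v_2$, since the remaining components of $v_1$ lie in $\text{span}\{|\psi_{17}\rangle,\ldots,|\psi_{20}\rangle\}$, orthogonal to $V_2\cap\text{span(missing)}$; this overlap vanishes because the projection of $|S\rangle$ onto $V_2\cap\text{span(missing)}$ is $|1+2+3\rangle|1+2+3\rangle$ and $v_2\perp|S\rangle$. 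Combined with $\dim R_1+\dim R_2=8=\dim\mathcal{H}_E$, this yields $\mathcal{H}_E=R_1\oplus R_2$, so $P_{\mathcal{H}_E}=P_{R_1}+P_{R_2}$ and $\rho_5=\tfrac{1}{8}P_{\mathcal{H}_E}=\tfrac{1}{2}\sigma_1+\tfrac{1}{2}\sigma_2$.

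It remains to show that each $\sigma_i$ is rank-$4$ PPT-BE and extreme in $\mathcal{P}$. Writing $P_{R_i}=P_{V_i\cap\text{missing}}-P_{|s_i\rangle}$ with $|s_1\rangle=|S\rangle$ and $|s_2\rangle=|1+2+3\rangle|1+2+3\rangle$, both terms are orthogonal sums of rank-$1$ projectors onto real product vectors $|\alpha\rangle|\beta\rangle$, and any such projector satisfies $(|\alpha\rangle\langle\alpha|\otimes|\beta\rangle\langle\beta|)^{T_B}=|\alpha\rangle\langle\alpha|\otimes|\beta^\star\rangle\langle\beta^\star|=|\alpha\rangle\langle\alpha|\otimes|\beta\rangle\langle\beta|$ since $|\beta\rangle$ is real (cf.\ Observation~\ref{obs1}). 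Hence $P_{R_i}^{T_B}=P_{R_i}\geq 0$ and $\sigma_i$ is PPT. Entanglement is immediate because $R_i\subset\mathcal{H}_E$ has no product vectors by the UPB property, so $\sigma_i$ is bound entangled. Since $\sigma_i$ is rank-$4$ PPT-BE with support contained in $V_i\cong\mathbb{C}^3\otimes\mathbb{C}^3$, Lemma~\ref{lemma1} gives $\sigma_i\in\mathcal{E}(\mathcal{P})$. The main obstacle is the orthogonal splitting $\mathcal{H}_E=R_1\oplus R_2$; its proof hinges on the identity that the projection of $|S\rangle$ onto $V_2$ equals $|1+2+3\rangle|1+2+3\rangle$, which is precisely the vector that lives in the one-dimensional intersection $V_1\cap V_2$ and links the two $\mathbb{C}^3\otimes\mathbb{C}^3$ subspaces.
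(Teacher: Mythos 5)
Your proof is correct and lands on exactly the same two states as the paper: your $\sigma_1=\tfrac14 P_{R_1}$ and $\sigma_2=\tfrac14 P_{R_2}$ coincide with the paper's $\tfrac14\sum_{i=1}^4\proj{\tilde\phi_i}$ and $\tfrac14\sum_{i=5}^8\proj{\tilde\phi_i}$, the PPT argument (real product projectors are invariant under partial transposition, cf.\ Observation~\ref{obs1}) is the same, and the extremality step via Lemma~\ref{lemma1} applied to the two-qutrit subspaces $W\otimes W$ and $W'\otimes W'$ is identical. Where you differ is in how the splitting $\rho_5=\tfrac12\sigma_1+\tfrac12\sigma_2$ is established: the paper reads it off from the explicit orthonormal entangled basis $\{|\phi_i\rangle\}_{i=1}^8$ of Eq.~(\ref{ent5}) (grouping $|\phi_1\rangle,\dots,|\phi_4\rangle$ against $|\phi_5\rangle,\dots,|\phi_8\rangle$, with the coefficients $a_i$ tuned to enforce orthogonality to the stopper), whereas you prove the basis-free statement $\mathcal{H}_E=R_1\oplus R_2$ by a dimension count plus an orthogonality check hinging on $(P_{W'}\otimes P_{W'})|S\rangle=|1{+}2{+}3\rangle|1{+}2{+}3\rangle$. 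This is a cleaner route in that it never requires constructing or normalising the $|\phi_i\rangle$, and it isolates the geometric reason the decomposition works (the one-dimensional overlap $V_1\cap V_2$, which the paper only remarks on after the proof). Two small points of rigour you should repair: (i) your claim that ``$|S\rangle$ belongs to both'' intersections is false for $V_2$ --- $|S\rangle$ has support on $|0\rangle$ and $|4\rangle$, so $|S\rangle\notin W'\otimes W'$; what is true and what you actually need is that the orthogonal projection of $|S\rangle$ onto $\operatorname{span}\{|\psi_{21}\rangle,\dots,|\psi_{25}\rangle\}$ is the nonzero vector $|1{+}2{+}3\rangle|1{+}2{+}3\rangle$, which still cuts the dimension from $5$ to $4$; (ii) the inclusion $V_1\cap\operatorname{span}(\text{missing})\subseteq\operatorname{span}\{|\psi_{17}\rangle,\dots,|\psi_{20}\rangle,|1{+}2{+}3\rangle|1{+}2{+}3\rangle\}$, which your orthogonality argument relies on, is asserted rather than derived; it does follow in one line from $V_1\cap V_2=\operatorname{span}\{|1{+}2{+}3\rangle|1{+}2{+}3\rangle\}$ together with the linear independence of $|\psi_{21}\rangle,\dots,|\psi_{25}\rangle$, so state that deduction explicitly.
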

\begin{proof}
Rewriting the Eq.(\ref{rho5-2}), we get $\rho_5=\frac{1}{2}(\frac{1}{4}\sum_{i=1}^{4}|\tilde{\phi_{i}}\rangle\langle\tilde{\phi_{i}}|)+\frac{1}{2}(\frac{1}{4}\sum_{i=5}^{8}|\tilde{\phi_{i}}\rangle\langle\tilde{\phi_{i}}|)$, where we take $\sigma_1\equiv\frac{1}{4}\sum_{i=1}^{4}|\tilde{\phi_{i}}\rangle\langle\tilde{\phi_{i}}|$ and $\sigma_2\equiv\frac{1}{4}\sum_{i=5}^{8}|\tilde{\phi_{i}}\rangle\langle\tilde{\phi_{i}}|$. Both $\sigma_1$ and $\sigma_2$ are entangled as their ranges are contained in $\mathcal{H}_E$. In particular, the range of $\sigma_2$ is contained in a two-qutrit subspace (of $\mathbb{C}^5\otimes\mathbb{C}^5$) spanned by $\{|1\rangle_A,|2\rangle_A,|3\rangle_A\}\otimes\{|1\rangle_B,|2\rangle_B,|3\rangle_B\}$. In fact, $\sigma_2$ can be written as: 
\begin{equation}\label{sigma-2}
\sigma_2=\frac{1}{4}\left(\mathbb{I}_3^\prime\otimes\mathbb{I}_3^\prime-\sum_{i=13}^{16}|\tilde{\psi}_i\rangle\langle\tilde{\psi}_i|-|\tilde{S}^\prime\rangle\langle\tilde{S}^\prime|\right),
\end{equation}
where $|S^\prime\rangle=|1+2+3\rangle|1+2+3\rangle$ and $\mathbb{I}_3^\prime$ is defined as 
\begin{equation}
\mathbb{I}_3^\prime:=|1\rangle\langle 1|+|2\rangle\langle 2|+|3\rangle\langle 3|.
\end{equation}

Observation \ref{obs1}, while employed to Eq.(\ref{sigma-2}), assures the positivity of $\sigma_2$ under partial transpose. Since, $\sigma_2$ turns out to be a rank-4 PPT-BE state of $\mathbb{C}^5\otimes\mathbb{C}^5$, therefore, according to Lemma \ref{lemma1}, $\sigma_2$ is an extreme point of the set $\mathcal{P}$ of $\mathbb{C}^5\otimes\mathbb{C}^5$.

To show the positivity of $\sigma_1$ under partial transposition, we rewrite it as:
\begin{eqnarray}\label{sigma-1}
\sigma_1=\frac{1}{4}\left(\mathbb{I}_5\otimes\mathbb{I}_5-\mathbb{I}_3^\prime\otimes\mathbb{I}_3^\prime-\sum_{i=1}^{12}|\tilde{\psi}_i\rangle\langle\tilde{\psi}_i|-|\tilde{S}\rangle\langle\tilde{S}|+|\tilde{S}^\prime\rangle\langle\tilde{S}^\prime|\right).
\end{eqnarray}
Similar argument as in the case of $\sigma_2$, applies to the above expression for $\sigma_1$ and assures the positivity of $\sigma_1$ under partial transposition. Then, applying the following mapping: $|1\rangle\rightarrow|0\rangle,~|2\rangle\rightarrow|1+2+3\rangle,~|3\rangle\rightarrow|4\rangle$ (for both Alice and Bob) to Eq.(\ref{sigma-2}), we get $\sigma_2\rightarrow\sigma_1$. This reveals the fact that the range of $\sigma_1$ is contained in a two-qutrit subspace (of $\mathbb{C}^5\otimes\mathbb{C}^5$) spanned by $\{|0\rangle_A,|1+2+3\rangle_A,|4\rangle_A\}\otimes\{|0\rangle_B,|1+2+3\rangle_B,|4\rangle_B\}$. Therefore, according to Lemma \ref{lemma1}, $\sigma_1$ is an extreme point of the set $\mathcal{P}$ of $\mathbb{C}^5\otimes\mathbb{C}^5$. This completes the proof.
\end{proof}

Please note that the two-qutrit subspaces of $\mathbb{C}^5\otimes\mathbb{C}^5$ mentioned in the above theorem (in which the ranges of states $\sigma_1$ and $\sigma_2$ are respectively contained) are not orthogonal, they have exactly one dimensional overlap. 
As already discussed in the Section \ref{sec3x3}, the entangled subspace of the Tiles UPB in $\mathbb{C}^3\otimes\mathbb{C}^3$ contains only one PPT-BE state which is also an extreme point of $\mathcal{P}$. Interestingly, for our construction in $\mathbb{C}^5\otimes\mathbb{C}^5$, there exists three PPT-BE states $\rho_5,~\sigma_1$, and $\sigma_2$. Here, $\sigma_1$ and $\sigma_2$ are extreme points of $\mathcal{P}$ as shown in Theorem \ref{theorem-1} while $\rho_5$ is an edge state but not an extreme point. Indeed, in the following corollary we give one-parameter family of PPT-BE states that are edge states. The range of all these states are contained in the same entangled subspace $\mathcal{H}_E$ as that of $\rho_5$.

\begin{corollary}\label{coro-1}	
Consider an one-parameter family of states of the form $\sigma(p):=p\sigma_1+(1-p)\sigma_2$, where $p\in[0,1]$ and $\sigma_1,\sigma_2$ are same as those used in the proof of Theorem \ref{theorem-1}. All of these states are PPT-BE edge states.
\end{corollary}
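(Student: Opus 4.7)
The plan is to verify the three defining properties of an edge state for $\sigma(p)$: positive partial transpose, entanglement, and the absence of a product state in the range whose partial conjugate sits in the range of the partial transpose (the criterion in Remark \ref{remark-1}). Each of these will essentially be free, inherited from $\sigma_1$ and $\sigma_2$ and from the structure of the entangled subspace $\mathcal{H}_E$.

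First I would dispose of the PPT property. Since both $\sigma_1$ and $\sigma_2$ are PPT by Theorem \ref{theorem-1}, and $\mathcal{P}$ is convex, any convex combination $\sigma(p) = p\sigma_1 + (1-p)\sigma_2$ is PPT as well. For the boundary points $p=0$ and $p=1$ the statement reduces to the already established fact that $\sigma_1,\sigma_2$ are extreme PPT-BE states and hence (by the remark in the paper that any PPT-BE point in $\mathcal{E(P)}$ is an edge state) are themselves edge states, so it suffices to treat $p\in(0,1)$.

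Next I would control the range. For $p\in(0,1)$ the standard identity
\begin{equation}
\mathcal{R}(\sigma(p)) = \mathcal{R}(\sigma_1) + \mathcal{R}(\sigma_2)
\end{equation}
holds, and by construction both summands lie inside the UPB-complement $\mathcal{H}_E$. Hence $\mathcal{R}(\sigma(p))\subseteq \mathcal{H}_E$. Because $\mathcal{H}_E$ is the entangled subspace associated with the UPB of Eq.(\ref{UPB-5}), it contains no product vector at all, and in particular the range of $\sigma(p)$ contains no product vector. This immediately shows that $\sigma(p)$ is entangled, and combined with the previous step, that $\sigma(p)$ is a PPT-BE state.

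Finally I would invoke Remark \ref{remark-1}. The edge-state criterion requires the non-existence of a product vector $|\varphi_A\rangle|\varphi_B\rangle\in\mathcal{R}(\sigma(p))$ with $|\varphi_A\rangle|\varphi_B^\star\rangle\in\mathcal{R}(\sigma(p)^{T_B})$; since $\mathcal{R}(\sigma(p))$ already fails to contain any product vector, this condition is satisfied vacuously. Therefore $\sigma(p)$ is an edge state for every $p\in[0,1]$. I do not foresee a genuine obstacle here: the content of the corollary is really that the entire convex hull of $\sigma_1,\sigma_2$ lives inside the fully entangled subspace $\mathcal{H}_E$, and the edge property then follows from Remark \ref{remark-1} without further work. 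The only point that warrants a brief sentence is the behaviour at the endpoints, which is handled by the extreme-point assertion in Theorem \ref{theorem-1} together with the general principle that extreme PPT-BE states are edge states.
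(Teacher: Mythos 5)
Your proposal is correct and follows essentially the same route as the paper: convexity of $\mathcal{P}$ gives PPT, containment of $\mathcal{R}(\sigma(p))$ in the fully entangled subspace $\mathcal{H}_E$ gives entanglement, and the edge criterion of Remark \ref{remark-1} is then satisfied vacuously because the range contains no product vector. The only difference is that you spell out the vacuous application of the remark and treat the endpoints separately, neither of which the paper needs to do (the range argument applies uniformly for all $p\in[0,1]$).
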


\begin{proof}
Any convex mixture of PPT states is again a PPT state, either a PPT-BE state or a separable state. As $\mathcal{R}(\sigma(p))$ is contained in fully entangled subspace $\mathcal{H}_E$, $\sigma(p)$ must be a PPT-BE state. Moreover, the Remark \ref{remark-1} guarantees it to be an edge state.
\end{proof}

Another important point is that if the Tiles UPB of  $\mathbb{C}^3\otimes\mathbb{C}^3$ is trivially extended to a higher dimensional Hilbert space (by adding suitable product states) then it is always possible to get a new UPB \cite{Roychowdhury}. Such a UPB contains $D-4$ product states, where $D$ is the net dimension of the extended system. Thus, if someone wants to construct a UPB in $\mathbb{C}^5\otimes\mathbb{C}^5$ preserving the tile structure of $\mathbb{C}^3\otimes\mathbb{C}^3$, it is then always possible to get a new UPB that contains 21 pure product states. This extension is trivial in the sense that the PPT-BE state corresponding to the higher dimensional UPB is of the same rank as old one. We, therefore, can say that our construction of Tiles UPB in $\mathbb{C}^5\otimes\mathbb{C}^5$ is a `nontrivial extension' of the Tiles UPB in $\mathbb{C}^3\otimes\mathbb{C}^3$. In the following lemma, it is shown that a nontrivial extension imposes a constraint on the cardinality of such a new UPB.    

\begin{lemma}
Preserving the tile structure of the UPB in $\mathbb{C}^3\otimes\mathbb{C}^3$ if One construct a new UPB in $\mathbb{C}^5\otimes\mathbb{C}^5$ then it is not possible to get a UPB with $n$ pure product states; where $n =18,~19,~20$. 	
\end{lemma}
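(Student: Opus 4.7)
The plan is a counting-plus-case argument: I would first show that respecting the tile structure of the $\mathbb{C}^5\otimes\mathbb{C}^5$ UPB allows only UPB cardinalities of the form $n=17+k+\ell+m$ for integers $k\in\{0,\dots,4\}$, $\ell\in\{0,\dots,4\}$, $m\in\{0,1\}$, and then rule out the configurations that would give $n\in\{18,19,20\}$ by exhibiting in each case a product state lying in the purported entangled complement.

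For the counting, I would note that because a single stopper $|s\rangle_A|t\rangle_B$ can render at most one direction in each tile's unused subspace non-orthogonal to the UPB, each outer $1\times 4$ tile must be filled by $3$ or $4$ orthogonal product states, each inner $1\times 2$ tile by $1$ or $2$, and the $1\times 1$ middle by $0$ or $1$; any smaller count leaves a tile complement of dimension $\ge 2$, which always contains a product state orthogonal to the single stopper and thus extends the set. With $k$ the number of full outer tiles, $\ell$ the number of full inner tiles, and $m\in\{0,1\}$ indicating the middle's occupancy, the UPB cardinality (including the stopper) reads $n=17+k+\ell+m$, so $n\in\{18,19,20\}$ is equivalent to $k+\ell+m\in\{1,2,3\}$.

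The central task is to rule out $k+\ell+m\in\{1,2,3\}$. The mechanism is that declaring any tile full imposes a vanishing constraint on a component of $|s\rangle$ or $|t\rangle$: for instance, a full middle forces $s_2=0$ or $t_2=0$, and a full inner tile $|1\rangle\otimes\mathrm{span}\{|1\rangle,|2\rangle\}$ forces $\langle s|1\rangle=0$ or $|t\rangle\in\mathrm{span}\{|0\rangle,|3\rangle,|4\rangle\}$. Such a vanishing in turn aligns the missing states of two different tiles so that they share a common $A$- or $B$-factor: if two missing states take the form $|i\rangle|x\rangle$ and $|i\rangle|y\rangle$ with the same $|i\rangle$ on side $A$, then the one-parameter family $|i\rangle(\beta|x\rangle+\gamma|y\rangle)$ consists of product states, and the single linear equation coming from orthogonality to the stopper admits a nontrivial solution; the resulting product state lies in the supposed entangled complement, contradicting unextendibility. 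As a concrete illustration for $k+\ell+m=1$ with the inner tile $|1\rangle\otimes\mathrm{span}\{|1\rangle,|2\rangle\}$ declared full, the forced vanishing $\langle s|1\rangle=0$ makes the missing state of the companion inner tile $\mathrm{span}\{|1\rangle,|2\rangle\}\otimes|3\rangle$ equal to $|2\rangle|3\rangle$, and together with the missing middle state $|2\rangle|2\rangle$ this gives $|2\rangle(|3\rangle+\alpha|2\rangle)$ (with $\alpha$ chosen to cancel the stopper inner product) as an explicit product state in the complement.

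The main obstacle is the combinatorial case work. For $k+\ell+m=1$, the tile structure's reflection symmetry exchanging the two parties together with its four-fold rotational symmetry among the outer and the inner tiles cut the essentially distinct cases down to three: one outer tile full, one inner tile full, and the middle full. In each I would write the aligned missing-state pair and the resulting product state explicitly as above. For $k+\ell+m=2,3$ the argument goes through a fortiori: more full tiles impose more vanishings on $|s\rangle$ and $|t\rangle$, which in turn produce strictly more common-factor pairs among the remaining missing states, so the same recipe---find a vanishing stopper coordinate, locate an aligned missing-state pair, form the stopper-orthogonal product combination---yields the obstructing product state in each case. Hence no tile-structure-preserving UPB of cardinality $18$, $19$, or $20$ exists.
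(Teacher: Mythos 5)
Your strategy is sound in outline but follows a genuinely different route from the paper's. The paper disposes of all three values of $n$ in two lines by rank counting: a tile-structure-preserving UPB of cardinality $n$ leaves an entangled complement of dimension $25-n$, the corresponding PPT-BE state still splits off the rank-$4$ piece $\sigma_2$ supported on the inner $3\times 3$ block (exactly as in Theorem \ref{theorem-1}), and the remainder would then be a PPT-BE state of rank $25-n-4\in\{3,2,1\}$ --- but rank-three and rank-two bipartite PPT entangled states do not exist, and a rank-one PPT state is a product state. Your combinatorial attack instead certifies extendibility directly, by locating a product state in the purported entangled complement. What your approach buys is independence from those external nonexistence theorems and an explicit extendibility witness; what it costs is a substantial case analysis, most of which you defer. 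The paper's argument, by contrast, needs no case analysis at all, though it leans on a particular reading of ``preserving the tile structure'' (namely that the inner block still contributes the rank-$4$ component) just as yours leans on the reading ``each state sits in a tile, plus one stopper.''

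Two concrete gaps remain in your plan. First, declaring a tile full yields a dichotomy, e.g.\ for $|1\rangle\otimes\mathrm{span}\{|1\rangle,|2\rangle\}$ either $\langle s|1\rangle=0$ \emph{or} $|t\rangle\perp\mathrm{span}\{|1\rangle,|2\rangle\}$; you treat only the first branch (the second is actually quicker, since it makes $|2\rangle|2\rangle$ orthogonal to the stopper and hence an extension when the middle is empty, but it must be said), and the degenerate sub-cases where further stopper coordinates such as $s_2$ or $t_3$ vanish each need a line. Second, the \emph{a fortiori} dismissal of $k+\ell+m=2,3$ is not automatic: one must check that the accumulated constraints are mutually consistent and that an aligned missing-state pair orthogonal to all tile states survives in every configuration. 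Indeed, already in the ``middle full'' case your recipe's natural pair $\bigl(|2\rangle|3\rangle,|2\rangle|2\rangle\bigr)$ is unavailable because $|2\rangle|2\rangle$ now belongs to the UPB, and one has to cross to a $B$-side tile to find the pair $|1\rangle|3\rangle$ and the missing state of $|1\rangle\otimes\mathrm{span}\{|1\rangle,|2\rangle\}$. None of this looks fatal, but as written the proposal is a program rather than a proof, whereas the paper's rank argument closes all three cases at once.
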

\begin{proof}
We suppose that it is possible to construct a UPB in $\mathbb{C}^5\otimes\mathbb{C}^5$ with 18 pure product states, i.e., $n=18$. We also assume that the tile structure of the UPB in $\mathbb{C}^3\otimes\mathbb{C}^3$ is preserved. These result a PPT-BE state $\rho_5^\prime$ of rank-7 which can be written as $\rho_5^\prime$ = $\frac{1}{2}\sigma_1^\prime+\frac{1}{2}\sigma_2$, where $\sigma_1^\prime$ is a PPT-BE state of rank-3 (see Theorem \ref{theorem-1}). But rank-3 PPT-BE states do not exist \cite{horo3,chen-1}. Similar argument holds for $n=19,~20$. 
\end{proof}

\subsection{Tiles UPB in $\mathbb{C}^{d}\otimes\mathbb{C}^{d}$}\label{secdxd}
\begin{table}[b]
\centering
\caption{}\label{tab-1}
\begin{tabular}{c|c}
\hline\hline
Value of & States ($\omega=e^{\frac{\pi i}{k}},~i=\sqrt{-1}$,\\$k$ & $k^\prime=1,\cdots,2k-1$)\\
\hline\hline
\multirow{3}{*}{$(d-1)/2$} 
& $|0\rangle|\sum\limits_{j=0}^{d-2}\omega^{jk^\prime}(j)\rangle$,~$|d-1\rangle|\sum\limits_{j=1}^{d-1}\omega^{jk^\prime}(j)\rangle$,\\
& $|\sum\limits_{j=0}^{d-2}\omega^{jk^\prime}(j)\rangle|d-1\rangle$,~$|\sum\limits_{j=1}^{d-1}\omega^{jk^\prime}(j)\rangle|0\rangle$.\\ 
\hline
\multirow{2}{*}{$(d-3)/2$ } 
& $|1\rangle|\sum\limits_{j=1}^{d-3}\omega^{jk^\prime}(j)\rangle$,~$|d-3\rangle|\sum\limits_{j=2}^{d-2}\omega^{jk^\prime}(j)\rangle$,\\
& $|\sum\limits_{j=1}^{d-3}\omega^{jk^\prime}(j)\rangle|d-3\rangle$,~$|\sum\limits_{j=2}^{d-3}\omega^{jk^\prime}(j)\rangle|1\rangle$.\\
\hline
\multicolumn{1}{c|}{$\vdots$}  & $\vdots$\\
\hline
\multirow{2}{*}{1} 
& $|(d-3)/2\rangle|(d-3)/2-(d-1)/2\rangle$, \\
& $|(d+1)/2\rangle|(d-1)/2-(d+1)/2\rangle$, \\
& $|(d-3)/2-(d-1)/2\rangle|(d+1)/2\rangle$, \\
& $|(d-1)/2-(d+1)/2\rangle|(d-3)/2\rangle$. \\ 
\hline\hline
\end{tabular}
\end{table}

We now generalize the results of the previous section for a system in $\mathbb{C}^{d}\otimes\mathbb{C}^{d}$ with $d$ being odd. For this purpose, we first describe the tile structure given in Fig.\ref{dxd}. In this figure, there are two types of layers: type-I and type-II. Type-I corresponds to the central layer which contains only one tile. We label this central tile by $k=0$. On the other hand, all other layers are type-II layers each of which contains four tiles. We label these type-II tiles by $k=1,...,(d-1)/2$, where the outer most layer is labeled by $k=(d-1)/2$ and the inner most layer is labeled by $k=1$. Now to construct the tiles UPB, we accumulate $(2k-1)$ number of pairwise orthogonal pure product states from each of the four tiles of the $k^{th}$ layer belonging to type-II (for $k=1,2,...,(d-1)/2$). Clearly, from the $k^{th}$ layer, we take $4(2k-1)$ states. In this way, we accumulate $(d-1)^2$ states in total. Along with these states, we add a stopper $|S\rangle=|0+1+\cdots+(d-1)\rangle|0+1+\cdots+(d-1)\rangle$. These result a UPB of cardinality $(d-1)^2+1$. In Table \ref{tab-1}, we give explicit forms of the states taken from each layer (type-II) in order to construct the UPB. We label these states by $\{|\psi_i\rangle\}_{i=1}^{(d-1)^2}$.

\begin{figure}[t]
\includegraphics{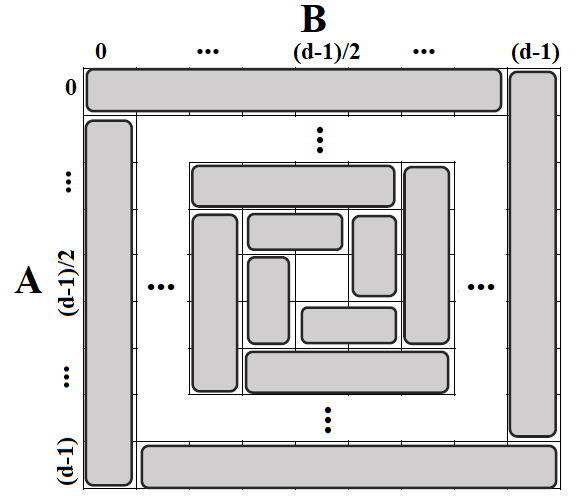}
\caption{Tile structure of a UPB in $\mathbb{C}^d\otimes\mathbb{C}^d$.}\label{dxd}
\end{figure}

\begin{table}[b!]
\caption{}\label{table-1}
\begin{tabular}{c|c}
\hline\hline
$k$ & Missing states ($t=(d-1)^2$)\\ 
\hline\hline
\multirow{4}{*}{$\frac{(d-1)}{2}$} 
& $|\psi_{t+1}\rangle=|0\rangle|0+1+\cdots+(d-2)\rangle$,~~~~~~~~~~~~~~\\
& $|\psi_{t+2}\rangle=|d-1\rangle|1+2+\cdots+(d-1)\rangle$,~~~~~~~~~\\
& $|\psi_{t+3}\rangle=|0+1+\cdots+(d-2)\rangle|d-1\rangle$,~~~~~~~~~\\
& $|\psi_{t+4}\rangle=|1+2+\cdots+(d-1)\rangle|0\rangle$,~~~~~~~~~~~~~~\\ 
\hline
\multirow{4}{*}{$\frac{(d-3)}{2}$} 
& $|\psi_{t+5}\rangle=|1\rangle|1+2+\cdots+(d-3)\rangle$,~~~~~~~~~~~~~~\\
& $|\psi_{t+6}\rangle=|d-2\rangle|2+3+\cdots+(d-2)\rangle$,~~~~~~~~~\\
& $|\psi_{t+7}\rangle=|1+2+\cdots+(d-3)\rangle|d-2\rangle$,~~~~~~~~~\\
& $|\psi_{t+8}\rangle=|2+3+\cdots+(d-2)\rangle|1\rangle$,~~~~~~~~~~~~~~\\ 
\hline
\multicolumn{1}{c|}{$\vdots$}  & $\vdots$\\
\hline
\multirow{4}{*}{1} 
& $|\psi_{d^2-4}\rangle=|(d-3)/2\rangle|(d-3)/2+(d-1)/2\rangle$, \\
& $|\psi_{d^2-3}\rangle=|(d+1)/2\rangle|(d-1)/2+(d+1)/2\rangle$, \\
& $|\psi_{d^2-2}\rangle=|(d-3)/2+(d-1)/2\rangle|(d+1)/2\rangle$, \\
& $|\psi_{d^2-1}\rangle=|(d-1)/2+(d+1)/2\rangle|(d-3)/2\rangle$, \\ \hline
\multirow{1}{*}{0} 
& $|\psi_{d^2}\rangle=|(d-1)/2\rangle|(d-1)/2\rangle$.~~~~~~~~~~~~~~~~~~~\\
\hline\hline
\end{tabular}
\end{table}

The stopper $|S\rangle$ blocks exactly one state from each of the four tiles of the $k^{th}$ layer of type-II for $k=1,2,...,(d-1)/2$, and also blocks the state of the central layer. So, to form a COPB, these $(2d-1)$ missing states along with the aforementioned $(d-1)^2$ orthogonal states are required. Here, the entangled subspace $\mathcal{H}_E$ is of dimension $2(d-1)$. Obviously, the density matrix $\rho_d$, proportional to the full rank projector onto $\mathcal{H}_E$ is of rank-$2(d-1)$. However, to construct a set of orthogonal entangled states that spans $\mathcal{H}_E$, we use the missing states as shown earlier. The missing states are listed in TABLE \ref{table-1}.

Next, we construct the set of entangled states that spans $\mathcal{H}_E$. These states are given in the TABLE \ref{table-2}. For the last states of each row, the coefficients are chosen in a way to make the state orthogonal to the stopper. Observe that the layer with $k=1$ and that with $k=0$ represent five missing states. Using these five missing states we construct four orthogonal entangled states given in the first row of TABLE \ref{table-2}. Taking an equal mixture of these four states we construct a mixed state $\sigma_{(d-1)/2}$. Clearly, $\sigma_{(d-1)/2}$ is an entangled state as its range is contained in $\mathcal{H}_E$. Moreover, following the same procedure as in case of Theorem \ref{theorem-1} it can be shown that $\sigma_{(d-1)/2}$ is indeed a PPT-BE state.

\begin{table}[t!]
\centering
\caption{}\label{table-2}
\begin{tabular}{l}
\hline\hline
~~~~~~~~~~~~~~~~~~~~Entangled states ($l=2d-2$)\\                           
\hline\hline
$|\phi_{l-3}\rangle=|\psi_{d^2-4}\rangle+|\psi_{d^2-3}\rangle-|\psi_{d^2-2}\rangle-|\psi_{d^2-1}\rangle$, \\
$|\phi_{l-2}\rangle=|\psi_{d^2-4}\rangle-|\psi_{d^2-3}\rangle+|\psi_{d^2-2}\rangle-|\psi_{d^2-1}\rangle$, \\
$|\phi_{l-1}\rangle=|\psi_{d^2-4}\rangle-|\psi_{d^2-3}\rangle-|\psi_{d^2-2}\rangle+|\psi_{d^2-1}\rangle$, \\
$|\phi_{l}\rangle=a_1(|\psi_{d^2-4}\rangle+|\psi_{d^2-3}\rangle+|\psi_{d^2-2}\rangle+|\psi_{d^2-1}\rangle)+a_2|\psi_{d^2}\rangle$, \\ 
\hline
$|\phi_{l-7}\rangle=|\psi_{d^2-8}\rangle+|\psi_{d^2-7}\rangle-|\psi_{d^2-6}\rangle-|\psi_{d^2-5}\rangle$, \\
$|\phi_{l-6}\rangle=|\psi_{d^2-8}\rangle-|\psi_{d^2-7}\rangle+|\psi_{d^2-6}\rangle-|\psi_{d^2-5}\rangle$, \\
$|\phi_{l-5}\rangle=|\psi_{d^2-8}\rangle-|\psi_{d^2-7}\rangle-|\psi_{d^2-6}\rangle+|\psi_{d^2-5}\rangle$, \\
$|\phi_{l-4}\rangle=a_3(|\psi_{d^2-8}\rangle+|\psi_{d^2-7}\rangle+|\psi_{d^2-6}\rangle+|\psi_{d^2-5}\rangle)+a_4|\phi_{l}^\perp\rangle$, \\ 
\hline
~~~~~~~~~~~~~~~~~~~~~~~~~~~~~~~~~~~~~~~~~~$\vdots$\\
\hline
$|\phi_{1}\rangle=|\psi_{t+1}\rangle+|\psi_{t+2}\rangle-|\psi_{t+3}\rangle-|\psi_{t+4}\rangle$, \\
$|\phi_{2}\rangle=|\psi_{t+1}\rangle-|\psi_{t+2}\rangle+|\psi_{t+3}\rangle-|\psi_{t+4}\rangle$, \\
$|\phi_{3}\rangle=|\psi_{t+1}\rangle-|\psi_{t+2}\rangle-|\psi_{t+3}\rangle+|\psi_{t+4}\rangle$, \\
$|\phi_{4}\rangle=a_{d-2}(|\psi_{t+1}\rangle+|\psi_{t+2}\rangle+|\psi_{t+3}\rangle+|\psi_{t+4}\rangle)+a_{d-1}|\phi_{8}^\perp\rangle$, \\
\hline\hline
\end{tabular}
\end{table}

Thereafter, we consider the missing states of the layer with $k=2$. Again, using these states we construct three entangled states given in second row (first three states) of TABLE \ref{table-2}. The last state of the same row is formed by taking linear combination of two states: first state is chosen making orthogonal to the other three states of this row and the second state is chosen making orthogonal to the last state of the previous row. The coefficients $a_3$ and $a_4$ (in second row) are chosen in a way that the state becomes orthogonal to the stopper $|S\rangle$. Using these entangled states we again construct another PPT-BT state $\sigma_{(d-3)/2}$. In this way this process is repeated up to the outermost layer ($k=(d-1)/2$) and up to PPT-BE state $\sigma_1$.

We are now ready to present Theorem \ref{theorem-2} which is a generalized version of Theorem \ref{theorem-1}. Proof of Theorem \ref{theorem-2} is straightforward from the above discussion. 

\begin{theorem}\label{theorem-2}
The PPT-BE state $\rho_d$ allows a decomposition of the form $\rho_d=\frac{2}{(d-1)}\sum_{k=1}^{(d-1)/2}\sigma_k$, where $\sigma_k$'s are rank-4 PPT-BE extreme points of the set  $\mathcal{P}$ of $\mathbb{C}^d\otimes\mathbb{C}^d$.
\end{theorem}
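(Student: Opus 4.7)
The plan is to mirror the argument used in Theorem \ref{theorem-1} layer by layer, leveraging the fact that by construction the entangled basis in TABLE \ref{table-2} is already partitioned into $(d-1)/2$ blocks of four states, one block per layer $k \in \{1,\dots,(d-1)/2\}$. First I would write $\rho_d = \frac{1}{2(d-1)}\sum_{i=1}^{2(d-1)} |\tilde{\phi}_i\rangle\langle\tilde{\phi}_i|$ using that $\{|\tilde\phi_i\rangle\}$ is an orthonormal basis of $\mathcal{H}_E$, then group the four states $\{|\tilde\phi_i\rangle\}_{i\in B_k}$ attached to layer $k$ and define $\sigma_k := \frac{1}{4}\sum_{i\in B_k}|\tilde{\phi}_i\rangle\langle\tilde{\phi}_i|$. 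This immediately yields the decomposition $\rho_d = \frac{2}{d-1}\sum_{k=1}^{(d-1)/2}\sigma_k$, shows that each $\sigma_k$ has rank exactly $4$, and places $\mathcal{R}(\sigma_k)\subseteq\mathcal{H}_E$, so each $\sigma_k$ is entangled.

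The next step is to certify that $\sigma_k$ is supported on a two-qutrit subspace of $\mathbb{C}^d\otimes\mathbb{C}^d$. For the innermost layer, one reads off directly from TABLE \ref{table-2} (first block, together with $|\psi_{d^2}\rangle$) that $\sigma_1$ has the same form as $\sigma_2$ in Eq.\,(\ref{sigma-2}): a normalized projector on the orthocomplement, within $\mathrm{span}\{|\tfrac{d-3}{2}\rangle,|\tfrac{d-1}{2}\rangle,|\tfrac{d+1}{2}\rangle\}^{\otimes 2}$, of a $\mathbb{C}^3\otimes\mathbb{C}^3$ Tiles UPB with stopper $|\tfrac{d-3}{2}+\tfrac{d-1}{2}+\tfrac{d+1}{2}\rangle^{\otimes 2}$. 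For $k>1$ I would apply the aggregation map
\begin{equation}
|\tfrac{d-1}{2}-k\rangle \mapsto |0\rangle,\quad \sum_{j=(d-1)/2-k+1}^{(d-1)/2+k-1}|j\rangle \mapsto |1\rangle,\quad |\tfrac{d-1}{2}+k\rangle \mapsto |2\rangle
\end{equation}
on both sides, which collapses the four missing states of the $k$th layer and the nested orthogonality vector $|\phi^\perp_{\cdots}\rangle$ coming from the previous block down to the five vectors appearing in the $\mathbb{C}^3\otimes\mathbb{C}^3$ Tiles UPB; the coefficients $a_{2k-1},a_{2k}$ were precisely chosen in TABLE \ref{table-2} so that after this identification $\sigma_k$ becomes the image of the rank-4 PPT-BE state attached to the Tiles UPB in $\mathbb{C}^3\otimes\mathbb{C}^3$. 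Positivity of $\sigma_k^{T_B}$ then follows from Observation \ref{obs1} applied to the (orthogonal product) states and the effective stopper appearing in this $\mathbb{C}^3\otimes\mathbb{C}^3$ representation, exactly as in Theorem \ref{theorem-1}.

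Finally, because $\sigma_k$ is a rank-$4$ PPT entangled state whose range lies in a $\mathbb{C}^3\otimes\mathbb{C}^3$ subspace of $\mathbb{C}^d\otimes\mathbb{C}^d$, Lemma \ref{lemma1} gives at once that $\sigma_k$ is an extreme point of the Peres set $\mathcal{P}$ of $\mathbb{C}^d\otimes\mathbb{C}^d$, completing the proof. The main obstacle I foresee is purely combinatorial/book-keeping rather than conceptual: one has to verify that the recursively defined ``orthogonality to the previous layer's stopper'' vectors $|\phi^\perp_{\cdots}\rangle$ used to build the last state of each block in TABLE \ref{table-2} really do lie in the effective two-qutrit subspace for layer $k$, and that the coefficients $a_{2k-1},a_{2k}$ simultaneously satisfy the orthogonality-to-$|S\rangle$ constraint and the orthogonality-to-previous-block constraint. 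The nested structure of TABLE \ref{table-2} was designed exactly for this purpose, so once the indexing is set up consistently the verification reduces to the $\mathbb{C}^3\otimes\mathbb{C}^3$ case already handled in Section \ref{sec3x3}.
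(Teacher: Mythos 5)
Your proposal is correct and follows essentially the same route as the paper: the paper declares the proof ``straightforward'' from its layer-by-layer construction of the $\sigma_k$'s in TABLE \ref{table-2}, argues PPT for each block exactly as in Theorem \ref{theorem-1} via the aggregation/relabelling map and Observation \ref{obs1}, notes that each $\sigma_k$ is supported in a two-qutrit subspace, and invokes Lemma \ref{lemma1} for extremality --- precisely your steps. The only difference is an immaterial reversal of the index convention (the paper attaches $\sigma_{(d-1)/2}$ to the innermost layer and $\sigma_1$ to the outermost), and your explicit flagging of the nested $|\phi^\perp\rangle$ bookkeeping is a point the paper glosses over rather than resolves.
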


Note that the range of any $\sigma_k$ is contained in a two-qutrit subspaces of $\mathbb{C}^d\otimes\mathbb{C}^d$. For example, the two-qutrit subspace corresponding to $\sigma_{(d-1)/2}$ is spanned by $\{|(d-3)/2\rangle_A,|(d-1)/2\rangle_A,|(d+1)/2\rangle_A\}\otimes\{|(d-3)/2\rangle_B,|(d-1)/2\rangle_B,|(d+1)/2\rangle_B\}$; the two-qutrit subspace corresponding to $\sigma_{(d-3)/2}$ is spanned by $\{|(d-5)/2\rangle_A,|(d-3)/2+(d-1)/2+(d+1)/2\rangle_A,|(d+3)/2\rangle_A\}\otimes\{|(d-5)/2\rangle_B,|(d-3)/2+(d-1)/2+(d+1)/2\rangle_B,|(d+3)/2\rangle_B\}$, and so on. A corollary to the above theorem is stated as the following.

\begin{corollary}\label{coro-2}	
Consider a $(d-3)/2$-parameter family of states of the form $\sigma(\vec{p}):=\sum_{k=1}^{(d-1)/2}p_k\sigma_k$, where $\vec{p}$ is a probability vector of dimension $(d-1)/2$ and $\sigma_k$'s are same as in Theorem \ref{theorem-2}. All these states are PPT-BE edge states.
\end{corollary}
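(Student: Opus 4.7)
The plan is to reduce the statement to the same two ingredients that underpin Corollary~\ref{coro-1}: convex combinations inherit the PPT property, and any state supported on the fully entangled subspace $\mathcal{H}_E$ of a UPB is automatically both bound entangled and an edge state via Remark~\ref{remark-1}. The argument splits cleanly into three routine steps.

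First, I would argue that $\sigma(\vec{p})$ is PPT. By Theorem~\ref{theorem-2} every $\sigma_k$ is a PPT-BE extreme point of $\mathcal{P}$, so in particular $\sigma_k^{T_B}\geq 0$ for each $k$. Partial transposition is linear and positivity is preserved under nonnegative convex sums, so $\sigma(\vec{p})^{T_B}=\sum_{k}p_k\sigma_k^{T_B}\geq 0$, i.e.\ $\sigma(\vec{p})\in\mathcal{P}$.

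Second, I would show that $\sigma(\vec{p})$ is entangled by locating its range. For positive operators, $\mathcal{R}\bigl(\sum_{k}p_k\sigma_k\bigr)$ equals the span of $\bigcup_{k:\,p_k>0}\mathcal{R}(\sigma_k)$. Each $\sigma_k$, being built out of the orthogonal entangled vectors of Table~\ref{table-2}, has range contained in $\mathcal{H}_E$. Hence $\mathcal{R}(\sigma(\vec{p}))\subseteq\mathcal{H}_E$. Since $\mathcal{H}_E$ is the orthogonal complement of a UPB-subspace, it contains no product state at all, so $\sigma(\vec{p})$ admits no separable decomposition and must be entangled. Combined with the previous step, $\sigma(\vec{p})$ is a PPT-BE state.

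Third, I would invoke Remark~\ref{remark-1} to certify the edge-state property. The remark asks for the non-existence of any product vector $|\varphi_A\rangle|\varphi_B\rangle\in\mathcal{R}(\sigma(\vec{p}))$ whose ``twisted'' partner $|\varphi_A\rangle|\varphi_B^\star\rangle$ lies in $\mathcal{R}(\sigma(\vec{p})^{T_B})$. Because $\mathcal{R}(\sigma(\vec{p}))\subseteq\mathcal{H}_E$ contains no product vector at all, the condition is vacuously satisfied and $\sigma(\vec{p})$ is an edge state. I do not anticipate any real obstacle: the substantive work has already been absorbed into Theorem~\ref{theorem-2} (establishing each $\sigma_k$ as PPT-BE on a two-qutrit block) and into the UPB construction of Section~\ref{secdxd} (ensuring $\mathcal{H}_E$ is fully entangled). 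The only mild subtlety is recognising that ``edge'' is being certified via the vacuous branch of Remark~\ref{remark-1} rather than by exhibiting an explicit witness.
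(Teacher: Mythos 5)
Your proposal is correct and follows essentially the same route as the paper: the paper's proof of Corollary~\ref{coro-1} (of which Corollary~\ref{coro-2} is the direct generalisation) likewise argues that a convex mixture of PPT states is PPT, that the range being contained in the fully entangled subspace $\mathcal{H}_E$ forces the state to be PPT-BE, and that Remark~\ref{remark-1} is then satisfied vacuously since $\mathcal{R}(\sigma(\vec{p}))$ contains no product vectors. Your only additions are to make explicit the range identity for sums of positive operators and the vacuous nature of the edge-state criterion, both of which the paper leaves implicit.
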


In Ref.\cite{leinaas2}, the authors conjectured that the rank of an extremal PPT-BE state in $\mathbb{C}^d\otimes\mathbb{C}^d$ with {\it full local ranks}, is always greater than or equal to $2(d-1)$. In the above corollary, if all $p_k$'s are nonzero then the edge states of rank $2(d-1)$ have full local ranks. Furthermore, it is an open problem whether $2(d-1)$ is also the minimum rank for any edge state to posses the property that it has full local ranks.  

\subsection{State discrimination by LOCC}
Study of UPB results in another interesting aspect called `quantum nonlocality without entanglement', where a set of bipartite product states allows local preparation but the set can non be perfectly distinguished by LOCC \cite{ben99,ben4,divin2} \footnote{Very recently, a nontrivial multipartite generalization of this phenomena is reported, where a set of multipartite product states allow local preparation but for perfect discrimination all the parties must come together or entangled resource across every bipartite cut is required \cite{mban1,mban2,mban3}.}. As already mentioned, a trivially extended UPB from $\mathbb{C}^3\otimes\mathbb{C}^3$ to $\mathbb{C}^5\otimes\mathbb{C}^5$ contains 21 pairwise orthogonal pure product states \cite{Roychowdhury}. But, our nontrivial construction in $\mathbb{C}^5\otimes\mathbb{C}^5$ contains 17 pairwise orthogonal pure product states. Both the trivial and the nontrivial constructions exhibit the phenomenon called {\it quantum nonlocality without entanglement}. Thus, it is impossible to perfectly distinguish all the states within a UPB by LOCC only. However, our nontrivial construction attributes a notable property compared to the trivial one. In the case of trivial extension it is always possible to distinguish few states perfectly from the UPB by orthogonality preserving LOCC. But in our case, not even a single state can be prefectly distinguished by such LOCC. This clearly indicates a {\it stronger notion} of local indistinguishability. Such a notion is also captured by all the higher dimensional UPBs constructed in this paper. Note that analogous notion of local indistinguishability has also been studied for multipartite systems \cite{halder1}. Now, one may raise the question that to exhibit this notion whether it is necessary to consider all the 17 states of the present UPB in $\mathbb{C}^5\otimes\mathbb{C}^5$. Interestingly, the answer is negative as it is possible to choose 9 states among these 17 states that can exhibit the aforesaid stronger notion. One possible choice of such 9 states are \{$|\psi_{2}\rangle$, $|\psi_{5}\rangle$, $|\psi_{8}\rangle$, $|\psi_{11}\rangle$, $|\psi_{13}\rangle$,$|\psi_{14}\rangle$, $|\psi_{15}\rangle$, $|\psi_{16}\rangle$, $|S\rangle$\} given in Eq.(\ref{UPB-5}). In order to distinguish this set by LOCC, at least one party has to start with a nontrivial and orthogonality preserving measurement. However, it can be shown that such a measurement does not exist for the present set. Proof follows from the argument given in \cite{Groisman, wal02}. Notice that from each tile (except the central tile) of Fig.(\ref{5x5}), we take only one state and then we add the stopper and this results a UCPB with cardinality 9 in $\mathbb{C}^5\otimes\mathbb{C}^5$. Another interesting construction is the following: It is possible to construct a set of 14 states that also exhibit the above mentioned notion but this set can be extended to a COPB. One such set of 14 states can be formed by excluding the states $|S\rangle$, $|\psi_{6}\rangle$, $|\psi_{12}\rangle$ from the UPB of Eq.(\ref{UPB-5}). A similar construction of completable set of locally indistinguishable states also follows from the Refs. \cite{Zhang14,Wang15}. However, the cardinality there will be higher than $14$. Both the UCBP and the set that is extendible to a COPB can be realized in $\mathbb{C}^d\otimes\mathbb{C}^d$.      

Our construction also leads to an interesting observation in the context of orthogonal mixed state discrimination. It is known that any state supported in the entangled subspace $\mathcal{H}_E$ cannot be distinguished unambiguously from the normalized projector onto the UPB subspace \cite{ban1}. In $\mathbb{C}^3\otimes\mathbb{C}^3$ there exists only one such which is again PPT \cite{chen}. However, our construction assures that there exists more than one PPT-BE states in $\mathbb{C}^d\otimes\mathbb{C}^d$ (with $d\ge5$ and $d$ is odd), that cannot be distinguished unambiguously from the normalized projector onto the UPB subspace. In particular, the states $\sigma(\vec{p})$ defined in Corollary \ref{coro-2} posses the above feature.

\section{Discussions}\label{discussion}
Our construction of PPT-BE states is based on Tiles UPBs. Theses UPBs can be thought of a generalization of Tiles UPB in $\mathbb{C}^3\otimes\mathbb{C}^3$. However, for higher dimension the structure is more intriguing than the simplest case. For example, the entangled subspace in $\mathbb{C}^3\otimes\mathbb{C}^3$ contains only one PPT-BE state -- the equal mixture of any orthonormal basis of that entangled subspace -- in other words the state proportional to the projector onto the entangled subspace is that PPT-BE state. All other states corresponding to unequal mixtures of the projectors of pure states forming an orthonormal basis of the entangled subspace are NPT \cite{chen}, indeed they are distillable \cite{chen1}. But our construction shows that for a higher dimensional system, the entangled subspace contains a parametric family of PPT-BE states. Here, one may raise the following question: Consider any set of orthonormal states spanning the entangled subspace for a higher dimensional system. Other than the given parametric family of states consider an arbitrary state which is a mixture of those entangled states. Will that mixed state be an NPT state? To get answer to this question, further analysis is required in this context. Another interesting research direction may be to generalize the present construction for even dimensional quantum systems and explore different properties of the corresponding PPT entangled states. It is also important to reduce the cardinality of the present UPBs so that it is possible to find new edge states with higher rank. Again, by reducing the cardinality it may be possible to find new PPT entangled states (other than rank-4) that are extremal points of the set $\mathcal{P}$. We have also discussed consequences of our construction in the context of quantum state discrimination by LOCC. We have introduced a stronger notion of local indistinguishability. Quantification of this stronger notion is another aspect for further research. It will also be interesting to find usefulness of these new PPT-BE states in quantum information processing tasks.

{\bf Acknowledgment:} SH acknowledges the financial support from The Institute of Mathematical Sciences, HBNI, Chennai where part of the work has been done when he was a visitor there. MB acknowledges support through an INSPIRE-faculty position at S. N. Bose National Centre for Basic Sciences, by the Department of Science and Technology, Government of India.


\begin{thebibliography}{00}
	
\bibitem{schro}
E. Schr\"{o}dinger; Discussion of Probability relations between separated systems,
\href{https://doi.org/10.1017/S0305004100013554}{Proc. Cambridge Philos. Soc. {\bf31}, 555 (1935)};
Probability relations between separated systems,
\href{https://doi.org/10.1017/S0305004100019137}{Proc. Cambridge Philos. Soc. {\bf32}, 446 (1936)}.	

\bibitem{horo1}
R. Horodecki, P. Horodecki, M. Horodecki, and K. Horodecki; Quantum entanglement, \href{http://dx.doi.org/10.1103/RevModPhys.81.865}
{Rev. Mod. Phys. {\bf81}, 865 (2009)}.

\bibitem{gune}
O. G\"{u}ne, Geza Toth; Entanglement detection,
\href{https://doi.org/10.1016/j.physrep.2009.02.004}{Physics Reports {\bf474}, 1 (2009)}.

\bibitem{gurvits}
L. Gurvits; Classical complexity and quantum entanglement,
\href{https://doi.org/10.1016/j.jcss.2004.06.003}{Journal of Computer and System Sciences {\bf69}, 448 (2004)}.

\bibitem{peres}
A. Peres; Separability Criterion for Density Matrices, 
\href{https://doi.org/10.1103/PhysRevLett.77.1413}{Phys. Rev. Lett. {\bf77}, 1413 (1996)}.


\bibitem{horo2}
P. Horodecki; Separability criterion and inseparable mixed states with positive 
partial transposition, \href{https://doi.org/10.1016/S0375-9601(97)00416-7}
{Phys. Lett. A {\bf232}, 333 (1997)}.

\bibitem{ben2}
C. H. Bennett, G. Brassard, S. Popescu, B. Schumacher, J. A. Smolin, and W. K. Wooters; Purification of Noisy Entanglement and Faithful Teleportation via Noisy Channels, 
\href{https://doi.org/10.1103/PhysRevLett.76.722}{Phys. Rev. Lett. {\bf76}, 722 (1996)}.

\bibitem{ben1}
C. H. Bennett, H. J. Bernstein, S. Popescu, and B. Schumacher; Concentrating 
partial entanglement by local operations, \href{https://doi.org/10.1103/PhysRevA.53.2046}
{Phys. Rev. A {\bf53}, 2046 (1996)}. 


\bibitem{ben3}
C. H. Bennett, D. P. DiVincenzo, J. A. Smolin, W. K. Wooters; Mixed-state entanglement 
and quantum error correction, \href{https://doi.org/10.1103/PhysRevA.54.3824}
{Phys. Rev. A {\bf54}, 3824 (1996)}.


\bibitem{divin1}
D. P. DiVincenzo, P. W. Shor, J. A. Smolin, B. M. Terhal, and A. V. Thapliyal; Evidence for bound entangled states with negative partial transpose,
\href{https://doi.org/10.1103/PhysRevA.61.062312}{Phys. Rev. A {\bf61}, 062312 (2000)}.

\bibitem{Dur}
W. D\"{u}r, J. I. Cirac, M. Lewenstein, and D. Bru{\ss}; Distillability and partial transposition in bipartite systems,
\href{https://doi.org/10.1103/PhysRevA.61.062313}{Phys. Rev. A {\bf61}, 062313 (2000)}.

\bibitem{Horodecki(1)}
P. Horodecki, M. Horodecki, and R. Horodecki; Bound Entanglement Can Be Activated,
\href{https://doi.org/10.1103/PhysRevLett.82.1056}{Phys. Rev. Lett. {\bf82}, 1056 (1999)}.

\bibitem{Vollbrecht}
K. G. H. Vollbrecht and M. M. Wolf; Activating Distillation with an Infinitesimal Amount of Bound Entanglement,
\href{https://doi.org/10.1103/PhysRevLett.88.247901}{Phys. Rev. Lett. {\bf88}, 247901 (2002)}.

\bibitem{Ishizaka-1} S. Ishizaka; Bound Entanglement Provides Convertibility of Pure Entangled States,
\href{https://doi.org/10.1103/PhysRevLett.93.190501}{Phys. Rev. Lett. {\bf93}, 190501 (2004)}.

\bibitem{Ishizaka-2} S. Ishizaka and M. B. Plenio; Multiparticle entanglement manipulation under positive partial transpose preserving operations,
\href{https://doi.org/10.1103/PhysRevA.71.052303}{Phys. Rev. A {\bf71}, 052303 (2005)}.

\bibitem{Masanes}
L. Masanes; All Bipartite Entangled States Are Useful for Information Processing,
\href{https://doi.org/10.1103/PhysRevLett.96.150501}{Phys. Rev. Lett. {\bf96}, 150501 (2006)}.

\bibitem{Horodecki(2)}
K. Horodecki, M. Horodecki, P. Horodecki, and J. Oppenheim; Secure Key from Bound Entanglement,
\href{https://doi.org/10.1103/PhysRevLett.94.160502}{Phys. Rev. Lett. {\bf94}, 160502 (2005)}.

\bibitem{Horodecki(3)}
K. Horodecki, M. Horodecki, P. Horodecki, and J. Oppenheim; General paradigm for distilling classical key from quantum
states, 
\href{http://ieeexplore.ieee.org/document/4802308/}{IEEE Trans. Inf. Theory {\bf55}, 1898 (2009)}.

\bibitem{Horodecki(4)}
K. Horodecki, \L{}. Pankowski, M. Horodecki, and P. Horodecki; Lowdimensional
bound entanglement with one-way distillable cryptographic key, 
\href{http://ieeexplore.ieee.org/document/4529274/}{IEEE Trans. Inf. Theory, {\bf54}, 2621 (2008)}.

\bibitem{Czekaj}
\L{}. Czekaj, A. Przysi\k{e}\'{z}na, M. Horodecki, and P. Horodecki; Quantum metrology: Heisenberg limit with bound entanglement,
\href{https://doi.org/10.1103/PhysRevA.92.062303}{Phys. Rev. A {\bf92}, 062303 (2015)}.

\bibitem{Moroder}
T. Moroder, O Gittsovich, M. Huber, and O. G\"{u}hne; Steering Bound Entangled States: A Counterexample to the Stronger Peres Conjecture'
\href{https://doi.org/10.1103/PhysRevLett.113.050404}{Phys. Rev. Lett. {\bf113}, 050404 (2014)}.

\bibitem{Vertesi}
T. V\'{e}rtesi and N. Brunner; Disproving the Peres conjecture by showing Bell nonlocality from bound entanglement,
\href{https://www.nature.com/articles/ncomms6297}{Nature Communications {\bf5}, 5297 (2014)}.

\bibitem{ben4}
C. H. Bennett, D. P. DiVincenzo, T. Mor, P. W. Shor, J. A. Smolin, and B. M. Terhal; 
Unextendible Product Bases and Bound Entanglement, 
\href{https://doi.org/10.1103/PhysRevLett.82.5385}{Phys. Rev. Lett. {\bf82}, 5385 (1999)}.

\bibitem{be-2} P. Horodecki and M. Lewenstein; Bound Entanglement and Continuous Variables,
\href{https://doi.org/10.1103/PhysRevLett.85.2657}{Phys. Rev. Lett. {\bf 85}, 2657 (2000)}.

\bibitem{be-3} D. Bru{\ss} and A. Peres; Construction of quantum states with bound entanglement,
\href{https://doi.org/10.1103/PhysRevA.61.030301}{Phys. Rev. A {\bf61}, 030301(R) (2000)}.

\bibitem{divin2}
D. P. DiVincenzo, T. Mor, P. W. Shor, J. A. Smolin, and B. M. Terhal; 
Unextendible Product Bases, Uncompletable Product Bases and Bound Entanglement, 
\href{http://dx.doi.org/10.1007/s00220-003-0877-6}
{Commun. Math. Phys. {\bf238}, 379 (2003)}.

\bibitem{be-4} S. Yu and Nai-le Liu; Entanglement Detection by Local Orthogonal Observables,
\href{https://doi.org/10.1103/PhysRevLett.95.150504}{Phys. Rev. Lett. {\bf95}, 150504 (2005)}.

\bibitem{be-5} M. Piani; Class of bound entangled states of $N+N$ qubits revealed by nondecomposable maps,
\href{https://doi.org/10.1103/PhysRevA.73.012345}{Phys. Rev. A {\bf73}, 012345 (2006)}.

\bibitem{be-6} S-M Fei, X. Li-Jost, and Bao-Zhi Sun; A class of bound entangled states,
\href{https://doi.org/10.1016/j.physleta.2005.12.038}{Physics Letters A {\bf352}, 321 (2006)}.

\bibitem{be-7} M. Piani and C. E. Mora; Class of positive-partial-transpose bound entangled states associated with almost any set of pure entangled states,
\href{https://doi.org/10.1103/PhysRevA.75.012305}{Phys. Rev. A {\bf75}, 012305 (2007)}.

\bibitem{Yu17} Sixia Yu and C. H. Oh; Family of nonlocal bound entangled states,
\href{https://doi.org/10.1103/PhysRevA.95.032111}{Phys. Rev. A {\bf 95}, 032111 (2017)}.

\bibitem{Sentis18} G. Sentís, J. N. Greiner, J. Shang, J. Siewert, and M. Kleinmann; Bound entangled states fit for robust experimental verification,
\href{https://doi.org/10.22331/q-2018-12-18-113}{Quantum {\bf 2}, 113 (2018)}


\bibitem{lewen}
M. Lewenstein, B. Kraus, P. Horodecki, and J. I. Cirac; Characterization of separable 
states and entanglement witnesses, \href{https://doi.org/10.1103/PhysRevA.63.044304}
{Phys. Rev. A {\bf63}, 044304 (2001)}.

\bibitem{Szarek}
S. Szarek, I. Bengtsson, K. Zyczkowski; On the structure of the body of states with positive partial transpose,
\href{http://dx.doi.org/10.1088/0305-4470/39/5/L02}{J. Phys. A: Math. Gen. {\bf 39}, L119-L126 (2006)}.

\bibitem{leinaas1}
J. M. Leinaas, J. Myrheim, and E. Ovrum; Extreme points of the set of density matrices 
with positive partial transpose, \href{http://dx.doi.org/10.1103/PhysRevA.76.034304}
{Phys. Rev. A {\bf76}, 034304 (2007)}.

\bibitem{leinaas2}
J. M. Leinaas, J. Myrheim, and P. \O{}. Sollid; Numerical studies of entangled 
positive-partial-transpose states in composite quantum systems,
\href{https://doi.org/10.1103/PhysRevA.81.062329}{Phys. Rev. A {\bf81}, 062329 (2010)}. 

\bibitem{leinaas3}
J. M. Leinaas, J. Myrheim, and P. \O{}. Sollid;  Low-rank extremal 
positive-partial-transpose states and unextendible product bases, 
\href{https://doi.org/10.1103/PhysRevA.81.062330}{Phys. Rev. A {\bf81}, 062330 (2010)}.

\bibitem{augusiak}
R. Augusiak, J. Grabowski, M. Ku\'{s}, and M. Lewenstein; Searching for extremal 
PPT entangled states, \href{https://doi.org/10.1016/j.optcom.2009.10.050}
{Optics Communications {\bf283}, 805 (2010)}.

\bibitem{Sollid}
P. \O{}. Sollid, J. M. Leinaas, and J. Myrheim; Unextendible product bases and extremal density matrices with positive partial transpose,
\href{https://doi.org/10.1103/PhysRevA.84.042325}{Phys. Rev. A {\bf84}, 042325 (2011)}.

\bibitem{chen}
L. Chen and D. \v{Z}. \DJ{}okovi\'{c}; Description of rank four entangled states of 
two qutrits having positive partial transpose, \href{https://doi.org/10.1063/1.3663837}
{J. Math. Phys. {\bf52}, 122203 (2011)}.

\bibitem{hansen}
L. O. Hansen, A. Hauge, J. Myrheim, and P. \O{}. Sollid; Low-rank 
positive-partial-transpose states and their relation to product vectors,
\href{https://doi.org/10.1103/PhysRevA.85.022309}{Phys. Rev. A {\bf85}, 022309 (2012)}.

\bibitem{garberg}
Ø. S. Garberg, B. Irgens, and J. Myrheim; Extremal states of positive partial 
transpose in a system of three qubits, \href{https://doi.org/10.1103/PhysRevA.87.032302}
{Phys. Rev. A {\bf87}, 032302 (2013)}.

\bibitem{Chen1}
L. Chen and D. \v{Z}. \DJ{}okovi\'{c}; Properties and Construction of Extreme Bipartite States Having Positive Partial Transpose,
\href{https://doi.org/10.1007/s00220-013-1770-6}{Commun. Math. Phys. {\bf 323} 241 (2013)}.

\bibitem{Ha}
K.-C. Ha and S.-H. Kye; Separable States with Unique Decompositions,
\href{https://doi.org/10.1007/s00220-014-1980-6}{Commun. Math. Phys. {\bf 328}, 131 (2014)}.

\bibitem{Krein} M. Krein and D. Milman; On extreme points of regular convex sets,
\href{http://eudml.org/doc/219061}{Studia Mathematica {\bf9}, 133  (1940)}.

\bibitem{ben99}
C. H. Bennett, D. P. DiVincenzo, C. A. Fuchs, T. Mor, E. Rains, P. W. Shor, J. A. Smolin, and W. K. Wootters; Quantum nonlocality without entanglement, 
\href{http://dx.doi.org/10.1103/PhysRevA.59.1070}{Phys. Rev. A {\bf 59}, 1070 (1999)}.

\bibitem{Groisman} 
B. Groisman and L. Vaidman; Nonlocal variables with product-state eigenstates,
\href{http://stacks.iop.org/JPhysA/34/6881}{J. Phys. A: Math. Gen. {\bf 34}, 35 (2001)}.

\bibitem{wal02}
J. Walgate and L. Hardy; Nonlocality, Asymmetry, and Distinguishing Bipartite States, \href{https://doi.org/10.1103/PhysRevLett.89.147901}{Phys. Rev. Lett. {\bf89}, 147901 (2002)}.

\bibitem{ye07}
M.-Y. Ye, W. Jiang, P.-X. Chen, Y.-S. Zhang, Z.-W. Zhou, and G.-C. Guo; Local distinguishability of orthogonal quantum states and generators of SU(N), 
\href{http://dx.doi.org/10.1103/PhysRevA.76.032329}{Phys. Rev. A {\bf76}, 032329 (2007)}.

\bibitem{ban1}
S. Bandyopadhyay; More Nonlocality with Less Purity 
\href{https://doi.org/10.1103/PhysRevLett.106.210402}
{Phys. Rev. Lett. {\bf106}, 210402 (2011)}.

\bibitem{chen1}
L. Chen and D. \v{Z}. \DJ{}okovi\'{c}; Distillability of non-positive-partial-transpose 
bipartite quantum states of rank four, \href{https://doi.org/10.1103/PhysRevA.94.052318}
{Phys. Rev. A {\bf94}, 052318 (2016)}.

\bibitem{halder}
S. Halder; Entanglement Cost Of Nonlocal Quantum Operations, PhD thesis (2017).

\bibitem{Roychowdhury} S. Bandyopadhyay, S. Ghosh, and V. Roychowdhury; Non-full-rank bound entangled states satisfying the range criterion,
\href{https://doi.org/10.1103/PhysRevA.71.012316}{Phys. Rev. A {\bf71}, 012316 (2005)}.

\bibitem{horo3}
P. Horodecki, J. A. Smolin, B. M. Terhal, and A. V Thapliyal; Rank two bipartite bound entangled states do not exist,
\href{https://doi.org/10.1016/S0304-3975(01)00376-0}{Theo. Comp. Sci. {\bf292}, 589 (2003)}.

\bibitem{chen-1}
L. Chen and Y.-X. Chen; Rank-three bipartite entangled states are distillable, \href{https://doi.org/10.1103/PhysRevA.78.022318}{Phys. Rev. A {\bf78}, 022318 (2008)}.

\bibitem{mban1} S. Halder, M. Banik, S. Agrawal, and S. Bandyopadhyay; Strong Quantum Nonlocality without Entanglement,
\href{https://doi.org/10.1103/PhysRevLett.122.040403}{Phys. Rev. Lett. {\bf 122}, 040403 (2019)}.

\bibitem{mban2} S. Agrawal, S. Halder, and M. Banik; Genuinely entangled subspace with all-encompassing distillable entanglement across every bipartition,
\href{https://doi.org/10.1103/PhysRevA.99.032335}{Phys. Rev. A {\bf 99}, 032335 (2019)}.

\bibitem{mban3} S. Rout, A. G. Maity, A. Mukherjee, S. Halder, and M. Banik; Genuinely Nonlocal Product Bases: Classification and Entanglement Assisted Discrimination,
\href{https://arxiv.org/abs/1905.05930}{arXiv:1905.05930}.

\bibitem{Zhang14} Zhi-Chao Zhang, Fei Gao, Guo-Jing Tian, Tian-Qing Cao, and Qiao-Yan Wen; Nonlocality of orthogonal product basis quantum states,
\href{https://doi.org/10.1103/PhysRevA.90.022313}{Phys. Rev. A {\bf 90}, 022313 (2014)}.

\bibitem{Wang15} Yan-Ling Wang, Mao-Sheng Li, Zhu-Jun Zheng, and Shao-Ming Fei; Nonlocality of orthogonal product-basis quantum states,
\href{https://doi.org/10.1103/PhysRevA.92.032313}{Phys. Rev. A {\bf 92}, 032313 (2015)}.

\bibitem{halder1}
Saronath Halder; Several nonlocal sets of multipartite pure orthogonal product states, 
\href{https://doi.org/10.1103/PhysRevA.98.022303}{Phys. Rev. A {\bf 98}, 022303 (2018)}.


\end{thebibliography}
\end{document}